\documentclass{llncs}
\usepackage{amsmath,amssymb}
\usepackage{graphv1}
\textwidth 6.5in
\oddsidemargin 0in
\evensidemargin 0in
\headheight 0in
\headsep 0in
\textheight 9in
\newcommand{\thd}{\ensuremath{\mathrm{th}}}
\newcommand{\oth}[1]{\ensuremath{\overline{#1}}}

\newcounter{claimctr}[lemma]
\newtheorem{observation}{Observation}
\newenvironment{myclaim}{\refstepcounter{claimctr}\medskip\par\noindent\textit{Claim \theclaimctr.}}{\par}

\begin{document}
\pagestyle{plain}
	\title{The Lexicographic Method for the Threshold Cover Problem\thanks{A preliminary version of this paper, claiming the same result as proved in this work, appeared in the proceedings of the conference CALDAM 2020. But the proof in that version contains a serious error, and the algorithm mentioned in that paper may fail to produce a 2-threshold cover if the input graph contains a paraglider as an induced subgraph.}}
	\author{Mathew~C. Francis\and Dalu Jacob}
	\institute{Indian Statistical Institute, Chennai Centre\\\email{\{mathew,dalujacob\}@isichennai.res.in}}
	\maketitle
	\begin{abstract}
	Threshold graphs are a class of graphs that have many equivalent definitions and have applications in integer programming and set packing problems. A graph is said to have a threshold cover of size $k$ if its edges can be covered using $k$ threshold graphs. Chv\'atal and Hammer, in 1977, defined the \emph{threshold dimension} $\thd(G)$ of a graph $G$ to be the least integer $k$ such that $G$ has a threshold cover of size $k$ and observed that $\thd(G)\geq\chi(G^*)$, where $G^*$ is a suitably constructed auxiliary graph. Raschle and Simon~[\textit{Proceedings of the Twenty-seventh Annual ACM Symposium on Theory of Computing}, STOC '95, pages 650--661, 1995] proved that $\thd(G)=\chi(G^*)$ whenever $G^*$ is bipartite. We show how the lexicographic method of Hell and Huang can be used to obtain a completely new and, we believe, simpler proof for this result. For the case when $G$ is a split graph, our method yields a proof that is much shorter than the ones known in the literature.
\keywords{Threshold cover \and Chain subgraph cover \and Lexicographic method.}
	\end{abstract}
\section{Introduction}
	We consider only simple, undirected and finite graphs. We denote an edge between two vertices $u$ and $v$ of a graph by the two-element set $\{u,v\}$, which is usually abbreviated to just $uv$. Two edges $ab,cd$ in a graph $G$ are said to form an \emph{alternating 4-cycle} if $ad,bc\in E(\overline{G})$. A graph $G$ that does not contain any pair of edges that form an alternating 4-cycle is called a \emph{threshold graph}; or equivalently, $G$ is $(2K_2,P_4,C_4)$-free~\cite{chvatal1977aggregations}. A graph $G=(V,E)$ is said to be \emph{covered} by the graphs $H_1,H_2,\ldots, H_k$ if $E(G)=E(H_1)\cup E(H_2)\cup\cdots\cup E(H_k)$.
	\begin{definition}[Threshold cover and threshold dimension]
	A graph $G$ is said to have a \emph{threshold cover} of size $k$ if it can be covered by $k$ threshold graphs.
	The \emph{threshold dimension} of a graph $G$, denoted as $\thd(G)$, is defined to be the smallest integer $k$ such that $G$ has a threshold cover of size $k$.
	\end{definition}
	Mahadev and Peled~\cite{mahadev1995threshold} give a comprehensive survey of threshold graphs and their applications.
	
	Chv\'atal and Hammer~\cite{chvatal1977aggregations} showed that the fact that a graph $G$ has $\thd(G)\leq k$ is equivalent to the following: there exist $k$ linear inequalities on $|V(G)|$ variables such that the characteristic vector of a set $S\subseteq V(G)$ satisfies all the inequalities if and only if $S$ is an independent set of $G$ (see~\cite{raschle1995recognition} for details). They further defined the auxiliary graph $G^*$ corresponding to a graph $G$ as follows.

	\begin{definition}[Auxiliary graph]
	Given a graph $G$, the graph $G^*$ has vertex set $V(G^*)=E(G)$ and edge set $E(G^*)=\{\{ab,cd\}\colon ab,cd\in E(G)$ such that $ab,cd$ form an alternating 4-cycle in $G\}$.
	\end{definition}

	A graph $G$ and the auxiliary graph $G^*$ corresponding to it is shown in Figure~\ref{fig:auxgraph}.
	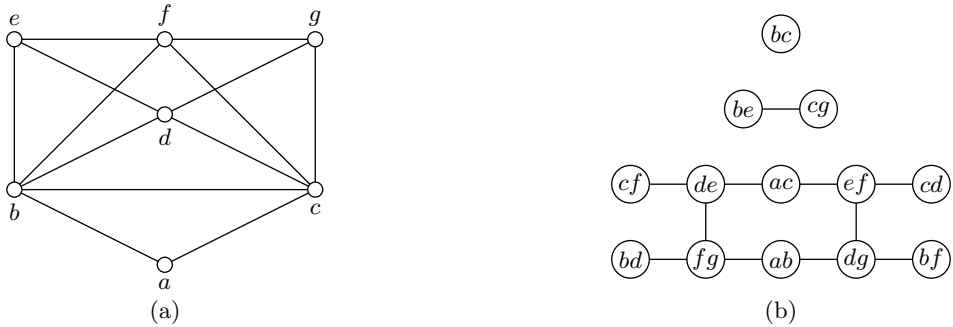
\begin{figure}
		\begin{tabular}{p{.49\textwidth}p{.49\textwidth}}
		\parbox{.49\textwidth}{
		\centering
		\renewcommand{\defradius}{0.1}
		\renewcommand{\vertexset}{(6,2,2),(7,0,3),(2,2,3),(4,0,1),(3,4,3),(5,4,1),(1,2,0)}
		\renewcommand{\edgeset}{(1,4),(1,5),(2,3),(2,4),(2,5),(2,7),(3,5),(3,6),(4,5),(4,6),(4,7),(5,6),(6,7)}
		\begin{tikzpicture}
		\drawgraph
		\node[below=2] at (\xy{1}) {$a$};
		\node[above=2] at (\xy{2}) {$f$};
		\node[above=2] at (\xy{3}) {$g$};
		\node[below=2] at (\xy{4}) {$b$};
		\node[below=2] at (\xy{5}) {$c$};
		\node[below=2] at (\xy{6}) {$d$};
		\node[above=2] at (\xy{7}) {$e$};
		\end{tikzpicture}
		}&
		\parbox{.49\textwidth}{
		\centering
		\renewcommand{\vertexset}{(14,2,0),(15,2,1),(23,1,0),(24,4,0),(25,0,1),(27,3,1),(35,2.5,2),(36,3,0),(45,2,3),(46,0,0),(47,1.5,2),(56,4,1),(67,1,1)}
		\renewcommand{\edgeset}{(14,23),(14,36),(15,27),(15,67),(23,46),(23,67),(24,36),(25,67),(27,36),(27,56),(35,47)}
		\begin{tikzpicture}
		\drawgraph
		\node at (\xy{14}) {$ab$};
		\node at (\xy{15}) {$ac$};
		\node at (\xy{23}) {$fg$};
		\node at (\xy{24}) {$bf$};
		\node at (\xy{25}) {$cf$};
		\node at (\xy{27}) {$ef$};
		\node at (\xy{35}) {$cg$};
		\node at (\xy{36}) {$dg$};
		\node at (\xy{45}) {$bc$};
		\node at (\xy{46}) {$bd$};
		\node at (\xy{47}) {$be$};
		\node at (\xy{56}) {$cd$};
		\node at (\xy{67}) {$de$};
		\end{tikzpicture}
		}\vspace{.1in}\\
		\centering (a)&\centering (b)
		\end{tabular}
	\caption{(a) A graph $G$, and (b) the auxiliary graph $G^*$ of $G$.}\label{fig:auxgraph}
	\end{figure}
	Chv\'atal and Hammer observed that since for any subgraph $H$ of $G$ that is a threshold graph, $E(H)$ is an independent set in $G^*$, the following lower bound on $\thd(G)$ holds.

	\begin{lemma}[Chv\'atal-Hammer]\label{bipartite}
	$\thd(G)\geq\chi(G^*)$.
	\end{lemma}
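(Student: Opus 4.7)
The plan is to turn a threshold cover of $G$ directly into a proper coloring of $G^*$ of the same size. Suppose $\thd(G)=k$ and let $H_1,H_2,\ldots,H_k$ be threshold graphs that cover $G$, so $E(G)=E(H_1)\cup\cdots\cup E(H_k)$. I want to exhibit a proper $k$-coloring of $G^*$.

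The main idea is to use $E(H_i)$ as the color class $i$. Concretely, for each edge $e\in V(G^*)=E(G)$, I would pick an arbitrary index $c(e)\in\{1,\ldots,k\}$ such that $e\in E(H_{c(e)})$; such an index exists because the $H_i$'s cover $G$. I then need to verify that $c$ is a proper coloring of $G^*$. Suppose for contradiction that $c(ab)=c(cd)=i$ for some edge $\{ab,cd\}$ of $G^*$. By the definition of $G^*$, the edges $ab$ and $cd$ form an alternating $4$-cycle in $G$, meaning $ad,bc\in E(\overline{G})$. Since $ab,cd\in E(H_i)$ and $H_i\subseteq G$, we also have $ad,bc\notin E(H_i)$, so $ab$ and $cd$ form an alternating $4$-cycle in $H_i$ as well, contradicting the fact that $H_i$ is a threshold graph (equivalently, contains no pair of edges forming an alternating $4$-cycle).

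Hence $c$ is a proper $k$-coloring of $G^*$, which yields $\chi(G^*)\leq k=\thd(G)$ as required. There is no real obstacle here: the proof is essentially the observation already articulated in the paragraph preceding the lemma, made precise. The only point that requires a moment of care is checking that the alternating $4$-cycle between $ab$ and $cd$ in $G$ persists inside the subgraph $H_i$, which uses only the fact that $H_i$ is a subgraph of $G$ and so cannot add the non-edges $ad$ and $bc$.
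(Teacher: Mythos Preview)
Your proof is correct and is precisely the argument the paper indicates: the paragraph preceding the lemma states that for any threshold subgraph $H$ of $G$, the set $E(H)$ is independent in $G^*$, which is exactly what you verify, and from which $\chi(G^*)\leq\thd(G)$ follows immediately.
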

	
	 This gave rise to the question of whether there is any graph $G$ such that $\thd(G)>\chi(G^*)$. Cozzens and Leibowitz~\cite{cozzens1984threshold} showed the existence of such graphs. In particular, they showed that for every $k\geq 4$, there exists a graph $G$ such that $\chi(G^*)=k$ but $\thd(G)>k$. The question of whether such graphs exist for $k=2$
	remained open for a decade (see~\cite{ma19942}). Ibaraki and Peled~\cite{ibarakipeled} showed, by means of some very involved proofs, that if $G$ is a split graph or if $G^*$ contains at most two non-trivial components, then $\chi(G^*)=2$ if and only if $\thd(G)=2$. They further conjectured that for any graph $G$, $\chi(G^*)=2\Leftrightarrow\thd(G)=2$.
	If the conjecture held, it would show immediately that graphs having a threshold cover of size 2 can be recognized in polynomial time, since the auxiliary graph $G^*$ can be constructed and its bipartiteness checked in polynomial time. In contrast, Yannakakis~\cite{yannakakis1982complexity} showed that it is NP-complete to recognize graphs having a threshold cover of size 3. Cozzens and Halsey~\cite{cozzenshalsey} studied some properties of graphs having a threshold cover of size 2 and showed that it can be decided in polynomial time whether the complement of a bipartite graph has a threshold cover of size 2.
	Finally, in 1995, Raschle and Simon~\cite{raschle1995recognition} proved the conjecture of Ibaraki and Peled by extending the methods in~\cite{ibarakipeled}.
	\begin{theorem}[Raschle-Simon]\label{thm:main-thm}
	For any graph $G$, $\chi(G^*)=2$ if and only if $\thd(G)=2$.
	\end{theorem}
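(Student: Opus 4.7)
The forward implication, $\thd(G) \leq 2 \Rightarrow \chi(G^*) \leq 2$, is immediate from Lemma~\ref{bipartite}: if $\{H_1, H_2\}$ is a threshold cover of $G$, the partition $E(G) = E(H_1) \cup E(H_2)$ is a proper 2-coloring of $G^*$, because two edges of $G$ that form an alternating 4-cycle in $G$ and lie in the same $H_i$ would form an alternating 4-cycle in $H_i$ as well, contradicting that $H_i$ is a threshold graph.

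For the reverse implication, suppose $\chi(G^*) = 2$. The plan is to apply the lexicographic method. I would fix an arbitrary linear order $v_1 < v_2 < \cdots < v_n$ on $V(G)$, which induces a lex order on $E(G)$ (ordering each edge by the pair of indices of its endpoints) and consequently a lex order on every $\{1,2\}$-coloring of $E(G)$ obtained by reading off colors in this edge order. Every proper 2-coloring of $G^*$ corresponds to an edge partition $E(G) = E_1 \cup E_2$ in which no two edges of the same part form an alternating 4-cycle in $G$; at least one such partition exists by bipartiteness. Among all such partitions, I would choose the one, call it $\phi^*$, whose color string is lexicographically smallest (color~$1$ preceding color~$2$), and set $H_i$ to be the subgraph of $G$ consisting of the edges assigned color~$i$ by $\phi^*$. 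The core claim is that $H_1$ and $H_2$ are both threshold graphs.

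Suppose, for contradiction, that $H_1$ contains a pair of edges $ab, cd$ with $ad, bc \notin E(H_1)$ that form an alternating 4-cycle in $H_1$. Since $ab$ and $cd$ share a $\phi^*$-color, they are non-adjacent in $G^*$, so at least one of $ad, bc$ must lie in $E(H_2)$. The induced subgraph of $G$ on $\{a,b,c,d\}$ is therefore one of a small number of configurations, the densest being the paraglider. In each case I would argue one of two alternatives: (a) flipping $\phi^*$ on the $G^*$-connected-component of a specifically chosen edge incident with $\{a,b,c,d\}$ produces another proper 2-coloring whose color string is strictly smaller than that of $\phi^*$, contradicting lex-minimality; or (b) the chain of recolorings forced by tracing alternating 4-cycles in $G$ out from the obstruction closes into an odd walk in $G^*$, contradicting bipartiteness.

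The main obstacle is precisely the paraglider case, in which the edges on $\{a,b,c,d\}$ are coupled through several overlapping alternating 4-cycles, so that a single component-flip may in fact increase rather than decrease the lex string---this is exactly where the CALDAM version of the argument misfires, as the footnote to the title warns. I would address it by broadening the search to a coordinated flip across a carefully chosen sequence of $G^*$-components, indexed by a walk in the alternating-4-cycle structure emanating from the obstruction, and then use lex-minimality of $\phi^*$ to show that this walk must contain an odd closed sub-walk in $G^*$, yielding the desired contradiction.
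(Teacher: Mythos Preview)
Your proposal is a plan rather than a proof: the two ``alternatives'' you offer for the contradiction step are never actually carried out, and for the one configuration you yourself flag as the crux (the paraglider), you only gesture at a ``coordinated flip across a carefully chosen sequence of $G^*$-components'' without saying which components, in what order, or why the resulting walk must close up with odd parity. Since you are already at the lex-minimal proper $2$-colouring, no single component flip can ever produce a smaller string, so alternative~(a) as stated is vacuous; and since $G^*$ is bipartite by hypothesis, alternative~(b) cannot be obtained by merely ``tracing alternating $4$-cycles'' unless you exhibit a concrete odd closed walk, which you do not. In short, the entire weight of the argument rests on the paraglider case, and that case is left open.

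There is also a concrete structural gap relative to the paper. You fix an \emph{arbitrary} linear order on $V(G)$; the paper fixes a \emph{Lex-BFS} order, and this is not cosmetic. The paper's Phase~II colouring is exactly your lex-minimal proper $2$-colouring restricted to the non-trivial components of $G^*$, and the paper shows by an explicit seven-vertex example (Figure~\ref{fig:bad} in the Conclusion) that with a non-Lex-BFS order the resulting $H_1,H_2$ can both fail to be threshold, even after the additional Phase~III repair step. Your version differs only in that isolated vertices of $G^*$ are forced into colour~$1$ rather than into both $H_i$; this does not rescue the example. The Lex-BFS property is used exactly once but essentially, via Observation~\ref{obs:lexbfs}, to push a hypothetical minimal obstruction to an earlier vertex and obtain the contradiction---an ingredient your sketch has no analogue of.

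What the paper actually does in place of your two alternatives: it isolates two local obstructions, \emph{strict pentagons} and \emph{strict switching paths}, proves by a sequence of inductive observations along paths in $G^*$ (Observations~\ref{obs:cdtobe}--\ref{obs:ptcd} and~\ref{type1}--\ref{type2}) that neither can occur under the Lex-BFS-based colouring, then runs a short Phase~III that recolours certain isolated $G^*$-vertices lying in non-strict pentagons, and finally checks (Lemmas~\ref{lem:nopentagons}--\ref{lem:nocycle}) that the adjusted colouring has no switching paths or cycles, which is exactly what is needed for both $H_i$ to be threshold. Your ``coordinated flip'' idea would have to reproduce all of this content to succeed.
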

	This proof of Raschle and Simon is very technical and involves the use of a number of complicated reductions and previously known results.
The aim of this paper is to provide a new, shorter and we believe, simpler, proof for Theorem~\ref{thm:main-thm}.

	Let $G$ be any graph such that $G^*$ is bipartite. We would like to prove that $G$ has a threshold cover of size 2. Note that if we take an arbitrary 2-coloring of $G^*$ having color classes $X_1,X_2$, the subgraph $G_i$ of $G$ formed by the edges in $X_i$ (where $i\in\{1,2\}$) need not necessarily be a threshold graph (this can be easily seen in the case when $G$ is a complete graph, as then $G^*$ contains only isolated vertices).

	We use a technique called the \emph{lexicographic method} introduced by Hell and Huang~\cite{hell1995lexicographic}, who demonstrated how this method can lead to shorter proofs and simpler recognition algorithms for certain problems that involve constructing a specific 2-coloring of an auxiliary bipartite graph whose vertices correspond to the edges of the graph.
	The method involves fixing an arbitrary ordering $<$ of the vertices of the graph, and then processing the edges in the ``lexicographic order'' implied by the ordering $<$.
	We adapt this technique to construct a 2-coloring of $G^*$ that can be used to generate a 2-threshold cover of $G$.
	However, it should be noted that unlike in the work of Hell and Huang, we start with a \emph{Lex-BFS ordering} of the vertices of the graph instead of an arbitrary ordering. It is an ordering of the vertices with the property that it is possible for a \emph{Lexicographic Breadth First Search (Lex-BFS)} algorithm to visit the vertices of the graph in that order. A Lex-BFS ordering is also a BFS ordering---i.e., a breadth-first search algorithm can also visit the vertices in that order---but it has some additional properties. Lex-BFS can be implemented to run in time linear in the size of the input graph and was introduced by Rose, Tarjan and Lueker~\cite{rose1976algorithmic} to construct a linear-time algorithm for recognizing chordal graphs. Later, Lex-BFS based algorithms were discovered for the recognition of many different graph classes (see~\cite{corneil2004lexicographic} for a survey).
	\section{Preliminaries}\label{prelims}
	Let $G=(V,E)$ be any graph. Recall that edges $ab,cd\in E(G)$ form an alternating 4-cycle if $bc,da\in E(\overline{G})$. In this case, we also say that $a,b,c,d,a$ is an alternating 4-cycle in $G$ (alternating 4-cycles are called $AC_4$s in~\cite{raschle1995recognition}). The edges $ab$ and $cd$ are said to be the \emph{opposite edges} of the alternating 4-cycle $a,b,c,d,a$. Thus for a graph $G$, the auxiliary graph $G^*$ is the graph with $V(G^*) = E(G)$ and $E(G^*)=\{\{ab,cd\}\colon ab,cd\in E(G)$ are the opposite edges of an alternating 4-cycle in $G\}$. Note that it follows from the definition of an alternating 4-cycle that if $a,b,c,d,a$ is an alternating 4-cycle, then the vertices $a,b,c,d$ are pairwise distinct. We shall refer to the vertex of $G^*$ corresponding to an edge $ab\in E(G)$ alternatively as $\{a,b\}$ or $ab$, depending upon the context. \smallskip
		
	Our goal is to provide a new proof for Theorem~\ref{thm:main-thm}.\smallskip
	
	It is easy to see that $\chi(G^*)=1$ if and only if $\thd(G)=1$. Therefore, by Lemma~\ref{bipartite}, it is enough to prove that if $G^*$ is bipartite, then $G$ can be covered by two threshold graphs. In order to prove this, we find a specific 2-coloring of the non-trivial components of $G^*$ (components of size at least 2) using the lexicographic method of Hell and Huang~\cite{hell1995lexicographic}.
	\medskip
	
	We say that $(A_0,A_1,A_2)$ is a \emph{valid 3-partition} of $E(G)$ if $\{A_0,A_1,A_2\}$ is a partition of $E(G)$ with the property that in any alternating 4-cycle in $G$, one of the opposite edges belongs to $A_1$ and the other to $A_2$. In other words, for any edge $\{ab,cd\}\in E(G^*)$, one of $ab,cd$ is in $A_1$ and the other in $A_2$.
				
	Given a valid 3-partition $(A_0,A_1,A_2)$ of $E(G)$ and $A\in\{A_1,A_2\}$, we say that $a,b,c,d$ is an \emph{alternating $A$-path} if $a\neq d$, $ab,cd\in A\cup A_0$, and $bc\in E(\overline{G})$. Further, we say that $a,b,c,d,e,f,a$ is an \emph{alternating $A$-circuit} if $a\neq d$, $ab,cd,ef\in A\cup A_0$, and $bc,de,fa\in E(\overline{G})$.
				
	\begin{observation}
	Let $(A_0,A_1,A_2)$ be a valid 3-partition of $E(G)$ and let $\{A,\oth{A}\}=\{A_1,A_2\}$.
	\begin{enumerate}
	\vspace{-0.06in}
	\renewcommand{\theenumi}{\alph{enumi}}
	\renewcommand{\labelenumi}{(\textit{\theenumi})}
	\item~\label{altpath} If $a,b,c,d$ is an alternating $A$-path, then $ad\in E(G)$.
	\item~\label{altcircuit} If $a,b,c,d,e,f,a$ is an alternating $A$-circuit, then $ef\in A$ and $ad\in\oth{A}$.
	\end{enumerate}
	\end{observation}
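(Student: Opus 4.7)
The plan for both parts is a short forcing argument that reduces, after a brief vertex-distinctness check, to the defining property of a valid 3-partition.

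For part~(a) I would argue by contradiction. Assume $ad\notin E(G)$, so that $ad\in E(\overline{G})$. Combined with $bc\in E(\overline{G})$, this would make the edges $ab,cd$ the opposite edges of an alternating 4-cycle in $G$: the four vertices are automatically pairwise distinct, since a single pair of vertices cannot be simultaneously an edge and a non-edge of $G$. The validity of the 3-partition would then force one of $ab,cd$ into $A_1$ and the other into $A_2$, contradicting the hypothesis that both lie in $A\cup A_0$.

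For part~(b) I would first observe that the initial portion $a,b,c,d$ of the circuit is itself an alternating $A$-path, since $ab,cd\in A\cup A_0$, $bc\in E(\overline{G})$, and $a\neq d$. Applying~(a) I would obtain $ad\in E(G)$. Next, the two edges $ad,ef\in E(G)$ together with the non-edges $de,fa\in E(\overline{G})$ produce an alternating 4-cycle $a,d,e,f,a$ (vertex-distinctness follows exactly as in~(a): any coincidence among $a,d,e,f$ would identify an edge of $G$ with a non-edge of $G$). The validity of the 3-partition then places one of $ad,ef$ in $A_1$ and the other in $A_2$; since $ef\in A\cup A_0$ and cannot lie in $A_0$, it must lie in $A$, which in turn forces $ad\in\overline{A}$.

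The only point requiring a bit of attention is the routine vertex-distinctness bookkeeping needed to invoke the definition of an alternating 4-cycle, and in each case it follows immediately from the impossibility of a pair of vertices being simultaneously an edge and a non-edge of $G$. I do not anticipate any deeper obstacle beyond this.
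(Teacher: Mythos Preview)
Your proposal is correct and follows essentially the same argument as the paper: for (a) assume $ad\in E(\overline{G})$ and obtain an alternating 4-cycle $a,b,c,d,a$ whose opposite edges both lie in $A\cup A_0$; for (b) apply (a) to get $ad\in E(G)$, then use the alternating 4-cycle $a,d,e,f,a$ to force $ef\in A$ and $ad\in\oth{A}$. The only difference is that you spell out the vertex-distinctness checks explicitly, whereas the paper treats distinctness as an automatic consequence of the definition of an alternating 4-cycle.
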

	\begin{proof}
	To prove~(\ref{altpath}), it just needs to be observed that if $ad\in E(\overline{G})$, then $a,b,c,d,a$ would be an alternating 4-cycle in $G$ whose opposite edges both belong to $A\cup A_0$, which contradicts the fact that $(A_0,A_1,A_2)$ is a valid 3-partition of $E(G)$. To prove~(\ref{altcircuit}), suppose that $a,b,c,d,e,f,a$ is an alternating $A$-circuit. Since $a,b,c,d$ is an alternating $A$-path, we have by~(\ref{altpath}) that $ad\in E(G)$. Then since $a,d,e,f,a$ is an alternating 4-cycle in $G$ and $ef\in A\cup A_0$, it follows that $ef\in A$ and $ad\in\oth{A}$.\hfill\qed
	\end{proof}
	
	We shall use the above observation throughout this paper without referring to it explicitly.
	\medskip
	
	Let $(A_0,A_1,A_2)$ be a valid 3-partition of $E(G)$ and let $\{A,\oth{A}\}=\{A_1,A_2\}$.
	We say that $(a,b,c,d,e)$ is an \emph{$A$-pentagon in $G$ with respect to $(A_0,A_1,A_2)$} if $a,b,c,d,e\in V(G)$, $ac,ad,be\in E(\overline{G})$, $ab,ae\in A$, $bc,bd,ec,ed\in\oth{A}$ and $cd\in A\cup A_0$. We abbreviate this to just ``$A$-pentagon'' when the graph $G$ and the 3-partition $(A_0,A_1,A_2)$ of $G$ are clear from the context. We say that an $A$-pentagon $(a,b,c,d,e)$ is a \emph{strict $A$-pentagon} if $cd\in A$. We say that $(a,b,c,d,e)$ is a \emph{pentagon} (resp. \emph{strict pentagon}) if it is an $A$-pentagon (resp. strict $A$-pentagon) for some $A\in\{A_1,A_2\}$. (Pentagons are similar to the ``$AP_5$-s'' in~\cite{raschle1995recognition}).
	
	 We say that $(x,y,z,w)$ is an \emph{$A$-switching path in $G$ with respect to $(A_0,A_1,A_2)$} if $x,y,z,w\in V(G)$, $xw\in E(\overline{G})$, $xy,zw\in A\cup A_0$, and $yz\in\oth{A}$. When the graph $G$ and the 3-partition $(A_0,A_1,A_2)$ of $G$ are clear from the context, we abbreviate this to just ``$A$-switching path''. We say that $(x,y,z,w)$ is a \emph{strict $A$-switching path} if it is an $A$-switching path and in addition, $xy,zw\in A$. We say that $(x,y,z,w)$ is a \emph{switching path} (resp. \emph{strict switching path}) if it is an $A$-switching path (resp. strict $A$-switching path) for some $A\in\{A_1,A_2\}$.
	 
	 Note that from the definitions of pentagons and switching paths, it follows that if $(a,b,c,d,e)$ is a pentagon, then the vertices $a,b,c,d,e$ are pairwise distinct, and if $(a,b,c,d)$ is a switching path, then the vertices $a,b,c,d$ are pairwise distinct.
	
	\begin{lemma} \label{lem:pathandpent}
		Let $(A_0,A_1,A_2)$ be a valid 3-partition of $E(G)$. Let $\{A,\oth{A}\}=\{A_1,A_2\}$. Let $(x,y,z,w)$ be an $A$-switching path in $G$ and let $y'z'\in E(G)$ be such that $yz',zy'\in E(\overline{G})$. Then,
		\begin{enumerate}
			\vspace{-0.06in}
			\renewcommand{\theenumi}{\alph{enumi}}
			\renewcommand{\labelenumi}{(\textit{\theenumi})}
			\item if $x=y'$, then $(x=y',y,z,w,z')$ is an $A$-pentagon and
			\item if $w=z'$, then $(w=z',z,y,x,y')$ is an $A$-pentagon.
		\end{enumerate}
	\end{lemma}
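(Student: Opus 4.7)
The plan is to prove (a) by locating a handful of alternating $4$-cycles on the vertex set $\{x,y,z,w,z'\}$ that between them pin down every required edge of the putative pentagon, and then to derive (b) by reversing the switching path. With $y'=x$, I would first observe that $\{yz,\, xz'\}$ is an alternating $4$-cycle in $G$: its two diagonals are $yz'$ (a non-edge by hypothesis) and $xz=y'z$ (a non-edge because $zy'\in E(\oth{G})$). Since $yz\in\oth{A}$, validity of the $3$-partition forces $xz'\in A$.

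Next I would verify that the three ``missing'' pentagon edges $yw$, $z'w$, and $z'z$ all lie in $E(G)$. Assuming any of them is a non-edge produces an alternating $4$-cycle of the form $\{xy,\,zw\}$ or $\{xz',\,zw\}$ — the required non-edges among $xz$, $xw$, $yz'$ are all in hand — which would force $zw\in\oth{A}$, contradicting $zw\in A\cup A_0$. With these edges secured, three further alternating $4$-cycles finish the job. The cycle $\{xy,\, z'w\}$, whose diagonals $xw$ and $yz'$ are non-edges, shows that neither $xy$ nor $z'w$ can sit in $A_0$; combined with $xy\in A\cup A_0$ this gives the crucial upgrade $xy\in A$, and consequently $z'w\in\oth{A}$. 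The cycle $\{xy,\, z'z\}$ (diagonals $xz,\,yz'$) then yields $z'z\in\oth{A}$, and $\{xz',\, yw\}$ (diagonals $xw,\,yz'$) yields $yw\in\oth{A}$. Together with the already-noted non-edges $xz,\,xw,\,yz'$ and with $zw\in A\cup A_0$, every clause of the $A$-pentagon definition for $(x=y',y,z,w,z')$ is met.

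For (b) I would appeal to the reverse symmetry of switching paths: $(w,z,y,x)$ is itself an $A$-switching path whenever $(x,y,z,w)$ is one, and relative to the reversed path the roles of $y'$ and $z'$ are interchanged, so the hypothesis $w=z'$ becomes exactly the condition ``$\tilde x=\tilde y'$'' needed to invoke~(a) on the reversed path with $y'$ and $z'$ swapped. The pentagon that~(a) then produces is $(w=z',z,y,x,y')$, as required.

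The one step that calls for real care is the promotion of $xy\in A\cup A_0$ to $xy\in A$ (and in (b) the analogous promotion of $zw$); the alternating $4$-cycle $\{xy,z'w\}$ is what supplies it, and I expect setting up this cycle — and checking that none of the tacitly assumed vertex coincidences can collapse the configuration — to be the main obstacle.
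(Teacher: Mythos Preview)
Your proposal is correct and essentially follows the paper's argument: you and the paper both deduce $xz'\in A$ from the alternating $4$-cycle on $\{yz,y'z'\}$, then chase the remaining edge-memberships through the same collection of alternating $4$-cycles (the paper merely bundles pairs of these into its ``alternating $A$-circuit'' device via Observation~(\ref{altcircuit}), which is precisely your two-step ``edge exists'' + ``edge is coloured'' reasoning), and both arguments derive~(b) from~(a) by the reversal symmetry of switching paths. The upgrade $xy\in A$ via the $4$-cycle $\{xy,z'w\}$ that you flag as the delicate point is exactly the content of the paper's circuit $z',y',z,w,x,y,z'$.
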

	\begin{proof}
		Since $yz\in \oth{A}$ and $\{yz,y'z'\}\in E(G^*)$ we have that $y'z'\in A$. Suppose that $x=y'$. Then $y,(x=y'),z,w,(x=y'),z',y$ is an alternating $A$-circuit (note that $y\neq w$ as $x\in N(y)\setminus N(w)$), implying that $yw\in \oth{A}$. This further implies that $z'\neq w$. Then we also have alternating $A$-circuits $z',y',z,w,x,y,z'$ and $z',(y'=x),w,z,(y'=x),y,z'$, implying that $xy\in A$ and $z'w,z'z\in \oth{A}$. Consequently, $(x=y',y,z,w,z')$ is an $A$-pentagon. Since $(w,z,y,x)$ is also an $A$-switching path, we can similarly conclude that if $w=z'$, then $(w=z',z,y,x,y')$ is an $A$-pentagon.\hfill\qed
	\end{proof}
\medskip

	Let $<$ be an ordering of the vertices of $G$.
	Given two $k$-element subsets $S=\{s_1,s_2,\ldots,s_k\}$ and $T=\{t_1,t_2,\ldots,t_k\}$ of $V(G)$, where $s_1< s_2<\cdots< s_k$ and $t_1< t_2<\cdots< t_k$, $S$ is said to be \emph{lexicographically smaller} than $T$, denoted by $S< T$,	if $s_j< t_j$ for some $j\in\{1,2,\ldots,k\}$, and $s_i=t_i$ for all $1\leq i < j$. In the usual way, we let $S\leq T$ denote the fact that either $S<T$ or $S=T$. For a set $S\subseteq V(G)$, we abbreviate $\min_< S$ to just $\min S$. Note that the relation $<$ (``is lexicographically smaller than'') that we have defined on $k$-element subsets of $V(G)$ is a total order. Therefore, given a collection of $k$-element subsets of $V(G)$, the lexicographically smallest one among them is well-defined.
\medskip
	
	The following observation states a well-known property of Lex-BFS orderings~\cite{corneil2004lexicographic}.
		
	\begin{observation}\label{obs:lexbfs}
	Let $<$ denote a Lex-BFS ordering of the vertices of a graph $G$. For $a,b,c\in V(G)$, if $a<b<c$, $ab\notin E(G)$ and $ac\in E(G)$, then there exists $x\in V(G)$ such that $x<a<b<c$, $xb\in E(G)$ and $xc\notin E(G)$.
	\end{observation}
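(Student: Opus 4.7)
The plan is to argue directly from the selection rule of Lex-BFS. Enumerate the vertices of $G$ as $v_1<v_2<\cdots<v_n$ following $<$, and assign to each $v_i$ the number $\sigma(v_i)=n-i+1$ so that the Lex-BFS algorithm selects vertices in decreasing order of $\sigma$. For an unvisited vertex $v$ at any stage of the algorithm, let $L(v)$ denote the list of $\sigma$-values of the already-visited neighbors of $v$, sorted in decreasing order. The defining property of Lex-BFS is that at each step the algorithm selects an unvisited vertex whose list $L$ is lexicographically maximum.

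Apply this at the step that selects $b$: since $c$ is still unvisited at that stage, $L(b)\geq L(c)$ lexicographically. The hypothesis $a<b$ ensures that $a$ has already been visited, and the hypotheses $ac\in E(G)$ and $ab\notin E(G)$ translate to $\sigma(a)\in L(c)\setminus L(b)$. Because both lists are sorted decreasingly, if they agreed on every entry strictly greater than $\sigma(a)$, then at the next position $L(c)$ would carry $\sigma(a)$ while $L(b)$ would carry either a smaller value or nothing, contradicting $L(b)\geq L(c)$. Hence they must first differ at an earlier position where $L(b)$ carries some value $\sigma(x)>\sigma(a)$ while $L(c)$ carries a strictly smaller value (or is shorter). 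Since $L(c)$ is sorted decreasingly, this $\sigma(x)$ cannot appear anywhere later in $L(c)$; thus $xb\in E(G)$, $xc\notin E(G)$, and $\sigma(x)>\sigma(a)$ gives $x<a$, as required.

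No real obstacle is anticipated: the entire argument is a careful unfolding of the Lex-BFS selection rule, which is precisely why this four-point property is usually quoted from the literature without proof. The one point to watch is the convention for comparing the sorted-decreasing lists, but either standard convention yields the same conclusion after obvious relabelling.
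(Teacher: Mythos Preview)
Your argument is correct: the comparison of the decreasing labels $L(b)$ and $L(c)$ at the moment $b$ is selected, together with $\sigma(a)\in L(c)\setminus L(b)$, forces a first point of difference where $L(b)$ carries some $\sigma(x)>\sigma(a)$ that cannot occur in $L(c)$, giving exactly the vertex $x$ required. The one step worth making explicit is why the first index $j$ of disagreement satisfies $\beta_j>\sigma(a)$: if $\sigma(a)=\gamma_k$ then $j\leq k$ (otherwise $\beta_k=\gamma_k=\sigma(a)\in L(b)$), and in both cases $j=k$ and $j<k$ one obtains $\beta_j>\sigma(a)$. Your prose covers this, but a reader might appreciate the index bookkeeping spelled out.

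As for comparison with the paper: there is nothing to compare. The paper does not prove this observation at all; it simply states it as a well-known property of Lex-BFS orderings and cites the survey~\cite{corneil2004lexicographic}. Your write-up therefore supplies a self-contained justification that the paper deliberately omits. This is standard material (it is essentially the ``$P_3$-rule'' or four-point characterisation of Lex-BFS orderings), so the paper's choice to quote it is reasonable, but including a short proof like yours does no harm and makes the exposition self-contained.
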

	
	\section{Proof of Theorem~\ref{thm:main-thm}}\label{sec:proofofthm}
	Assume that $G^*$ is bipartite.
	
	We shall now construct a partial 2-coloring of the vertices of $G^*$ using the colors $\{1,2\}$ by means of an algorithm that consists of three phases. We shall describe the first two phases here, after which a partial 2-coloring of $G^*$ is obtained. The third phase, which will be described later, modifies this coloring so as to obtain a 2-threshold cover of $G$.
	\medskip
	
	\noindent\textbf{Phase I.} Construct a Lex-BFS ordering $<$ of $G$.
	\medskip
	
	\noindent Recall that every vertex of $G^*$ is a two-element subset of $V(G)$.
	
	\begin{tabbing}
	\noindent\textbf{Phase II.} \=For every non-trivial component $C$ of $G^*$, perform the following operation:\\[.05in]\>Choose the lexicographically smallest vertex in $C$ (with respect to the ordering $<$) and assign\\\>the color 1 to it. Extend this to a proper coloring of $C$ using the colors $\{1,2\}$.
	\end{tabbing}
	
	Note that after Phase II, every vertex of $G^*$ that is in a non-trivial component has been colored either 1 or 2. For $i\in\{1,2\}$, let $F_i=\{e\in V(G^*)\colon e$ is colored $i\}$. Further, let $F_0$ denote the set of all isolated vertices (trivial components) in $G^*$. Clearly, $F_0$ is exactly the set of uncolored vertices of $G^*$ and we have $V(G^*)=F_0\cup F_1\cup F_2$. Note that since the opposite edges of any alternating 4-cycle in $G$ correspond to adjacent vertices in $G^*$, one of them receives color 1 and the other color 2 in the partial 2-coloring of $G^*$ constructed in Phase II. It follows that $(F_0,F_1,F_2)$ is a valid 3-partition of $E(G)$.
		
	\subsection{No strict pentagons}\label{sec:nostrictpent}
	In this section we shall prove that there are no strict pentagons in $G$ with respect to $(F_0,F_1,F_2)$.

	Let $\{F,\oth{F}\}=\{F_1,F_2\}$. Let $(a,b,c,d,e)$ be a strict $F$-pentagon and $c_0d_0,c_1d_1,\ldots,c_kd_k$ be a path in $G^*$, where $c_0=c$, $d_0=d$, $k\geq 0$, and for each $i\in\{0,1,\ldots,k-1\}$, $c_id_{i+1},d_ic_{i+1}\in E(\overline{G})$. Since $cd=c_0d_0\in F$, it follows that $c_id_i\in F$ for all even $i$ and $c_id_i\in\oth{F}$ for all odd $i$.
	\begin{observation}\label{obs:cdtobe}
	For each $i\in\{0,1,\ldots,k\}$, the edges $c_ib,c_ie,d_ib,d_ie$ exist and they belong to $F$ when $i$ is odd and to $\oth{F}$ when $i$ is even.
	\end{observation}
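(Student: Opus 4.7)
My plan is to prove Observation~\ref{obs:cdtobe} by induction on $i$. The base case $i=0$ is immediate: since $c_0=c$ and $d_0=d$, the definition of the strict $F$-pentagon $(a,b,c,d,e)$ gives $cb,ce,db,de\in\oth{F}$, which matches the claim because $0$ is even. For the inductive step I would assume the statement for $i$ and derive it for $i+1$, treating only the case of $i$ even (the case $i$ odd is symmetric, obtained by swapping $F$ and $\oth{F}$). Under this assumption the inductive hypothesis reads $c_ib,c_ie,d_ib,d_ie\in\oth{F}$, and the alternation of colors along the $G^*$-path $c_0d_0,c_1d_1,\ldots,c_kd_k$ gives $c_id_i\in F$ and $c_{i+1}d_{i+1}\in\oth{F}$.

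I would then split the step into existence and color. For existence, I want to show $bc_{i+1},bd_{i+1},ec_{i+1},ed_{i+1}\in E(G)$. Consider $bc_{i+1}$: if on the contrary $bc_{i+1}\in E(\overline{G})$, then the four vertices $b,c_i,d_{i+1},c_{i+1}$ would form an alternating 4-cycle, since $bc_i\in\oth{F}\subseteq E(G)$, $c_id_{i+1}\in E(\overline{G})$ (from the $G^*$-path), $d_{i+1}c_{i+1}\in\oth{F}\subseteq E(G)$, and $c_{i+1}b\in E(\overline{G})$ (by assumption). The opposite edges of this alternating 4-cycle would be $bc_i$ and $c_{i+1}d_{i+1}$, both lying in $\oth{F}$; this contradicts the validity of the 3-partition $(F_0,F_1,F_2)$. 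Hence $bc_{i+1}\in E(G)$. The remaining three existence claims follow by the same scheme, substituting $d_i$ for $c_i$ or $e$ for $b$ as needed.

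For the color, to show $bc_{i+1}\in F$ I would exhibit an alternating 4-cycle in which $bc_{i+1}$ is opposite an edge already known to lie in $\oth{F}$. The quadruple $b,c_{i+1},d_i,e$ does the job: $bc_{i+1}\in E(G)$ was just established, $c_{i+1}d_i\in E(\overline{G})$ is given by the $G^*$-path, $d_ie\in\oth{F}\subseteq E(G)$ is the inductive hypothesis, and $eb\in E(\overline{G})$ comes from the pentagon. Thus $bc_{i+1}$ and $d_ie$ are opposite edges of an alternating 4-cycle, and since $d_ie\in\oth{F}$ the validity of $(F_0,F_1,F_2)$ forces $bc_{i+1}\in F$. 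Analogous alternating 4-cycles (pairing $bd_{i+1}$ with $c_ie$, and pairing $ec_{i+1},ed_{i+1}$ with edges from $\{c_i,d_i\}$ to $b$) give the other three colors.

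I do not anticipate a serious obstacle. The only care required is in book-keeping: for each of the four new edges one must pick the auxiliary quadruple so that (i) the would-be alternating 4-cycle used for existence has its two opposite edges both in $\oth{F}$, producing the needed contradiction, and (ii) the alternating 4-cycle used for color pairs the new edge with an edge of $\{c_ib,c_ie,d_ib,d_ie\}$ available from the inductive hypothesis while keeping the two non-edges honest (one comes from the $G^*$-path, the other from $be\in E(\overline{G})$ of the pentagon). Once the $i$-even step is written down, the $i$-odd step is its verbatim mirror image under $F\leftrightarrow\oth{F}$.
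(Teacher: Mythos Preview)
Your approach is correct and is essentially the same as the paper's; the paper merely packages your two steps into one by invoking its ``alternating $A$-circuit'' observation. Concretely, for the induction step (with $i$ odd in the paper's indexing) the paper writes down the alternating $\oth{F}$-circuit $c_i,d_i,c_{i-1},b,e,d_{i-1},c_i$ and applies Observation~(\ref{altcircuit}), which in one shot gives $c_ib\in F$; unpacking that observation yields exactly your existence 4-cycle and your color 4-cycle. One small point of bookkeeping you should make explicit: before forming either 4-cycle you need $c_{i+1},d_{i+1}\notin\{b,e\}$, which the paper checks via ``$d_{i-1}\in N(b)\setminus N(c_i)$'' and which in your setup follows from the inductive hypothesis together with the non-edges $d_ic_{i+1},c_id_{i+1}\in E(\overline{G})$.
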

	\begin{proof}
	We prove this by induction on $i$. This is easily seen to be true when $i=0$. Suppose that $i>0$. We shall assume without loss of generality that $i$ is odd as the other case is symmetric. Then by the induction hypothesis, $c_{i-1}b,d_{i-1}b,c_{i-1}e,d_{i-1}e\in\oth{F}$. Then $c_i,d_i,c_{i-1},b,e,d_{i-1},c_i$ is an alternating $\oth{F}$-circuit (note that $c_i\neq b$ as $d_{i-1}\in N(b)\setminus N(c_i)$), implying that $c_ib\in F$. By symmetric arguments, we get $c_ie,d_ib,d_ie\in F$.
	\hfill\qed
	\end{proof}

	\begin{remark}\label{rem:coincide}
	By the above observation, we have that:
	\begin{enumerate}
	\vspace{-0.06in}
	\renewcommand{\theenumi}{\alph{enumi}}
	\renewcommand{\labelenumi}{(\textit{\theenumi})}
	\item for each $i\in\{0,1,\ldots,k\}$, $c_i,d_i\notin\{b,e\}$,
	\item\label{it:parity} if $\{c_i,d_i\}\cap\{c_j,d_j\}\neq\emptyset$ for some $0\leq i,j\leq k$, then $i\equiv j\mod 2$, and
	\item\label{it:anotcd} for each even $i\in\{0,1,\ldots,k\}$, we have $a\notin\{c_i,d_i\}$.
	\end{enumerate} 
	\end{remark} 
	
	\begin{observation}\label{obs:apentagon}
	If $c_1\neq a$, then $(d,b,c_1,a,e)$ is a strict $\oth{F}$-pentagon. Similarly, if $d_1\neq a$, then $(c,b,d_1,a,e)$ is a strict $\oth{F}$-pentagon.
	\end{observation}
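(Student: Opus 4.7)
The plan is to verify directly that $(d,b,c_1,a,e)$ satisfies every clause of the definition of a strict $\oth{F}$-pentagon; all but one of these clauses are immediate. The non-edges $dc_1, da, be \in E(\overline{G})$ come from, respectively, the definition of the $G^*$-path, the given $F$-pentagon, and the given $F$-pentagon. The $\oth{F}$-edges $db, de$ are part of the given pentagon. The $F$-edges $bc_1, ec_1$ come from Observation~\ref{obs:cdtobe} applied at $i=1$ (which is odd), and $ab, ae \in F$ are again part of the given pentagon. So the one condition that actually requires work is $c_1 a \in \oth{F}$, and this is the crux of the proof; the analogous claim for $d_1 \neq a$ will follow by swapping the roles of $c$ and $d$ (hence of $c_1$ and $d_1$).

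My approach to $c_1 a \in \oth{F}$ is in two stages: first establish $c_1 a \in E(G)$, then upgrade to $c_1 a \in \oth{F}$. For the first stage, I would check that $(a,b,e,c_1)$ is an alternating $F$-path: indeed $ab \in F \cup F_0$ and $ec_1 \in F \cup F_0$ (the latter by Observation~\ref{obs:cdtobe}), $be \in E(\overline{G})$ (from the pentagon), and $a \neq c_1$ by hypothesis. Observation~(\ref{altpath}) then yields $ac_1 \in E(G)$. The key trick, and the only real obstacle in the argument, is spotting that the non-adjacent pair $\{b,e\}$ lets one route an alternating $F$-path from $a$ through $e$ to $c_1$, rather than through $b$ directly (which is blocked because $bc_1 \in F \subseteq E(G)$).

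For the second stage, I would exhibit an alternating 4-cycle in $G$ whose opposite edges are $ac_1$ and $cd$. The four vertices $a, c_1, d, c$ are pairwise distinct: $a \neq c_1$ is the hypothesis, $a \notin \{c,d\}$ from the $F$-pentagon, $c \neq d$ since $cd$ is an edge, and $c_1 \notin \{c,d\}$ from Remark~\ref{rem:coincide}(\ref{it:parity}) applied to $i=0, j=1$. The edges $ac_1$ and $cd$ lie in $E(G)$, while the non-edges $c_1 d$ (from the $G^*$-path) and $ca$ (from the pentagon) lie in $E(\overline{G})$. Hence $a, c_1, d, c, a$ is an alternating 4-cycle in $G$, so $\{ac_1, cd\} \in E(G^*)$; since $(F_0, F_1, F_2)$ is a valid 3-partition and $cd \in F$, this forces $ac_1 \in \oth{F}$. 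Plugging this back into the checklist of the first paragraph completes the verification that $(d,b,c_1,a,e)$ is a strict $\oth{F}$-pentagon.
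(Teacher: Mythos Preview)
Your proof is correct and essentially the same as the paper's: the paper establishes $ac_1\in\oth{F}$ in one stroke via the alternating $F$-circuit $c_1,b,e,a,c,d,c_1$, whereas you unfold that circuit into its two constituent pieces (an alternating $F$-path giving $ac_1\in E(G)$, then an alternating 4-cycle against $cd$ forcing $ac_1\in\oth{F}$). The only cosmetic difference is that your path uses the edges $ab,ec_1$ while the paper's circuit uses $c_1b,ea$, but both routes exploit the same non-edge $be$ and arrive at the same conclusion.
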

	\begin{proof}
	By Observation~\ref{obs:cdtobe}, we have $c_1b,c_1e,d_1b,d_1e\in F$.
	Suppose that $c_1\neq a$. Then $c_1,b,e,a,c,d,c_1$ is an alternating $F$-circuit, and therefore we have that $ac_1\in\oth{F}$. It now follows that $(d,b,c_1,a,e)$ is a strict $\oth{F}$-pentagon. By similar arguments, it can be seen that if $d_1\neq a$, then $ad_1\in\oth{F}$ and therefore $(c,b,d_1,a,e)$ is a strict $\oth{F}$-pentagon.
	\hfill\qed
	\end{proof}

	\begin{observation}\label{obs:cdprop}
	Let $S_0=\{a,c_0,d_0\}$ and for $1\leq i\leq k$, let $S_i=S_{i-1}\cup\{c_i,d_i\}$. Let $i\in\{0,1,\ldots,k\}$. For each $z\in\{c_i,d_i\}$, there exist $x_z,y_z\in S_i$ such that $(x_z,b,y_z,z,e)$ is a strict $F$-pentagon when $i$ is even and a strict $\oth{F}$-pentagon when $i$ is odd.
	\end{observation}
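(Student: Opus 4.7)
I plan to prove Observation~\ref{obs:cdprop} by induction on $i$.

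For the base case $i=0$, the set is $S_0=\{a,c,d\}$, and the hypothesis that $(a,b,c,d,e)$ is a strict $F$-pentagon handles $z=d$ via $x_d=a$, $y_d=c$. Since the $A$-pentagon definition is symmetric in its third and fourth vertex positions, $(a,b,d,c,e)$ is also a strict $F$-pentagon, handling $z=c$ via $x_c=a$, $y_c=d$.

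For the inductive step, assume the statement for $i-1$, and let $P=F$ if $i-1$ is even and $P=\oth{F}$ otherwise, so that the IH pentagons are strict $P$-pentagons while the target pentagons at step $i$ are strict $\oth{P}$-pentagons. By the inductive hypothesis there exist strict $P$-pentagons $(x_{c_{i-1}},b,y_{c_{i-1}},c_{i-1},e)$ and $(x_{d_{i-1}},b,y_{d_{i-1}},d_{i-1},e)$ with all of $x_\bullet, y_\bullet$ in $S_{i-1}$. I plan to construct the required $\oth{P}$-pentagon at $z=c_i$ (the case $z=d_i$ being symmetric) by exhibiting an alternating $P$-circuit. Consider the sequence $c_i,b,e,x_{d_{i-1}},y_{d_{i-1}},d_{i-1},c_i$: the edges $c_ib$, $ex_{d_{i-1}}$, $y_{d_{i-1}}d_{i-1}$ all lie in $P$ (the first by Observation~\ref{obs:cdtobe}, the other two from the pentagon), and the pairs $be$, $x_{d_{i-1}}y_{d_{i-1}}$, $d_{i-1}c_i$ all lie in $E(\overline{G})$ (the first two from the pentagon, the last from the $G^*$-path). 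Provided $c_i\neq x_{d_{i-1}}$, this is a valid alternating $P$-circuit and we may conclude $c_ix_{d_{i-1}}\in\oth{P}$. Combined with Observation~\ref{obs:cdtobe} and the conditions of the IH pentagon, it is then routine to check that $(d_{i-1},b,x_{d_{i-1}},c_i,e)$ is a strict $\oth{P}$-pentagon, giving $x_{c_i}=d_{i-1}$, $y_{c_i}=x_{d_{i-1}}$, both of which belong to $S_{i-1}\subseteq S_i$.

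The main obstacle will be the degenerate case $c_i=x_{d_{i-1}}$, in which the above alternating circuit collapses. In that situation I would first try the symmetric construction built from the IH pentagon at $c_{i-1}$: the sequence $c_i,b,e,x_{c_{i-1}},y_{c_{i-1}},c_{i-1},c_i$ is an alternating $P$-circuit whenever $c_{i-1}c_i\in E(\overline{G})$ and $c_i\neq x_{c_{i-1}}$, producing the required pentagon with $x_{c_i}=c_{i-1}$, $y_{c_i}=x_{c_{i-1}}$. If even this fails, the remaining coincidences are tightly constrained: Remark~\ref{rem:coincide}(\ref{it:parity}) forces any repetition along the $G^*$-path to respect parity of the index, and Observation~\ref{obs:cdtobe} pins down the colors of the edges from $b$ and $e$ to every $c_j,d_j$. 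A careful case analysis, setting $y_{c_i}=d_i$ so that $y_{c_i}c_i\in\oth{P}$ and $by_{c_i},ey_{c_i}\in P$ hold automatically, and picking $x_{c_i}$ from among the earlier $c_j,d_j$ of the correct color parity, should then yield the required pentagon in all leftover cases.
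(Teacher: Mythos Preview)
Your core construction is the same as the paper's: the alternating $P$-circuit you write as $c_i,b,e,x_{d_{i-1}},y_{d_{i-1}},d_{i-1},c_i$ is exactly the circuit the paper writes (as a cyclic shift) $x_{z'},b,e,z,z',y_{z'},x_{z'}$ with $z=c_i$, $z'=d_{i-1}$, and the resulting pentagon $(d_{i-1},b,x_{d_{i-1}},c_i,e)$ is the paper's $(x_z,b,y_z,z,e)$ with $x_z=z'$, $y_z=x_{z'}$.

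The gap is entirely in the degenerate case $c_i=x_{d_{i-1}}$. Neither of your fallbacks works as stated. The alternative circuit through $c_{i-1}$ needs $c_{i-1}c_i\in E(\overline{G})$, but the $G^*$-path only guarantees $d_{i-1}c_i,c_{i-1}d_i\in E(\overline{G})$; nothing prevents $c_{i-1}c_i\in E(G)$. The second fallback, setting $y_{c_i}=d_i$, requires an $x_{c_i}\in S_i$ with $x_{c_i}c_i,x_{c_i}d_i\in E(\overline{G})$ and $x_{c_i}b,x_{c_i}e\in\oth{P}$; you have no candidate for such a vertex, and there is no reason one should exist among the earlier $c_j,d_j$. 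Finally, you never treat the possibility $c_i=a$, which is live for odd $i$ (Remark~\ref{rem:coincide}(\ref{it:anotcd}) only excludes it for even $i$) and is not covered by Remark~\ref{rem:coincide}(\ref{it:parity}).

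The paper's resolution is much simpler than the case analysis you sketch, and you should adopt it: observe that $c_i=x_{d_{i-1}}$ forces $c_i\in S_{i-1}$. If $c_i=a$, then $i$ is odd and Observation~\ref{obs:apentagon} directly supplies a strict $\oth{F}$-pentagon $(\,\cdot\,,b,\,\cdot\,,a,e)$ with the two new vertices in $S_1\subseteq S_i$. Otherwise $c_i\in\{c_j,d_j\}$ for some $j<i$, and Remark~\ref{rem:coincide}(\ref{it:parity}) gives $j\equiv i\pmod 2$, so the (strong) induction hypothesis at index $j$ already provides the required pentagon with parameters in $S_j\subseteq S_i$. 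In other words, upgrade your induction to strong induction and dispose of the case $z\in S_{i-1}$ \emph{before} building the circuit; then in the remaining case $z\notin S_{i-1}$ the obstruction $z=x_{z'}$ cannot occur, since $x_{z'}\in S_{i-1}$.
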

	\begin{proof}
	We are given an $i\in\{0,1,\ldots,k\}$ and a vertex $z$ that is either $c_i$ or $d_i$.
	First let us consider the case when $z=a$. Since $z\in\{c_i,d_i\}$, we have by Remark~\ref{rem:coincide}(\ref{it:anotcd}), that $i$ is odd, which implies that $i\geq 1$. Note that we have either $c_1\neq a$ or $d_1\neq a$. If $c_1\neq a$, we define $x_z=d$, $y_z=c_1$ and if $d_1\neq a$, we define $x_z=c$, $y_z=d_1$. Clearly, $x_z,y_z\in S_1\subseteq S_i$, since $i\geq 1$. By Observation~\ref{obs:apentagon}, we get that $(x_z,b,y_z,z,e)$ is a strict $\oth{F}$-pentagon, and so we are done. Therefore, we shall now assume that $z\neq a$.
	
	We shall now prove the statement of the lemma by induction on $i$.
	Clearly, when $i=0$, $z\in\{c_0,d_0\}$, so we can choose $x_z=a$, $y_z\in\{c,d\}\setminus\{z\}$ such that $(x_z,b,y_z,z,e)$ is a strict $F$-pentagon (note that $x_z,y_z\in S_0$ as required). So let us assume that $i\geq 1$.
	If $z\in\{c_j,d_j\}$ for some $j<i$, then by Remark~\ref{rem:coincide}(\ref{it:parity}) we have that $j\equiv i\mod 2$ and by the induction hypothesis applied to $j$ and $z$, there exist $x_z,y_z\in S_j\subseteq S_i$ (as $j<i$) such that $(x_z,b,y_z,z,e)$ is a strict $F$-pentagon if $i$ is even and a strict $\oth{F}$-pentagon if $i$ is odd, completing the proof. Therefore, we assume that there is no $j<i$ such that $z\in\{c_j,d_j\}$. Since we have already assumed that $z\neq a$, we now have $z\notin S_{i-1}$.

	Observe that there exists $z'\in\{c_{i-1},d_{i-1}\}$ such that $z'z\in E(\overline{G})$.
	Then by the induction hypothesis, there exist $x_{z'},y_{z'}\in S_{i-1}$ such that $(x_{z'},b,y_{z'},z',e)$ is a strict $F$-pentagon if $i-1$ is even and a strict $\oth{F}$-pentagon if $i-1$ is odd. 	
	Define $x_z=z'$ and $y_z=x_{z'}$. Then we have $x_z,y_z\in S_{i-1}\subseteq S_i$. Since $y_z\in S_{i-1}$ and $z\notin S_{i-1}$, we also have that $y_z\neq z$. Using Observation~\ref{obs:cdtobe} and the fact that $(x_{z'},b,y_{z'},z',e)$ is a strict $F$-pentagon (resp. $\oth{F}$-pentagon) if $i$ is odd (resp. even), we now have that $(y_z=x_{z'}),b,e,z,z',y_{z'},(x_{z'}=y_z)$ is an alternating $F$-circuit (resp. $\oth{F}$-circuit). Therefore, $y_zz\in\oth{F}$ if $i$ is odd and $y_zz\in F$ if $i$ is even. Consequently we get that $(x_z,b,y_z,z,e)$ is a strict $F$-pentagon when $i$ is even and a strict $\oth{F}$-pentagon when $i$ is odd.
	\hfill\qed
	\end{proof}
	
	It is easy to see that Observation~\ref{obs:cdprop} implies the following.
	
	\begin{remark}\label{rem:cdprop}
	Let $\{F,\oth{F}\}=\{F_1,F_2\}$ and let $(a,b,c,d,e)$ be any strict $F$-pentagon in $G$ with respect to $(F_0,F_1,F_2)$. Let $c'd'$ be a vertex in the same component as $cd$ in $G^*$. Then for each $z\in\{c',d'\}$, there exist $x_z,y_z\in V(G)$ such that $(x_z,b,y_z,z,e)$ is a strict $F$-pentagon if $c'd'\in F$ and a strict $\oth{F}$-pentagon if $c'd'\in\oth{F}$.
	\end{remark}

	Suppose that there is at least one strict pentagon in $G$ with respect to $(F_0,F_1,F_2)$.
	We say that a pentagon $(a,b,c,d,e)$ is lexicographically smaller than a pentagon $(a',b',c',d',e')$ if $\{a,b,c,d,e\}<\{a',b',c',d',e'\}$. Consider the lexicographically smallest strict pentagon $(a,b,c,d,e)$ in $G$. Let $\{F,\oth{F}\}=\{F_1,F_2\}$ such that $(a,b,c,d,e)$ is a strict $F$-pentagon.
	Since $cd\in F$, it belongs to a non-trivial component $C$ of $G^*$. Therefore, there exists $uv\in E(G)$ such that $cv,du\in E(\overline{G})$ (so that $\{cd,uv\}\in E(G^*)$). Clearly, at least one of $u,v$ is distinct from $a$. We assume without loss of generality that $u\neq a$ (by interchanging the labels of $c$ and $d$ if necessary). By applying Observation~\ref{obs:apentagon} to the path $(c_0d_0=cd),(c_1d_1=uv)$ in $G^*$, we get that $(d,b,u,a,e)$ is a strict $\oth{F}$-pentagon, which implies that $au\in\oth{F}$. By Observation~\ref{obs:cdtobe} applied to the same path, we get that $ub,ue\in F$.

	\begin{observation}\label{obs:amincd}
	$a>\min\{c,d\}$.
	\end{observation}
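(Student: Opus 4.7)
The plan is to argue by contradiction: suppose $a < \min\{c,d\}$, so in particular $a < c$. I will construct a strict pentagon whose vertex set is lexicographically smaller than $\{a,b,c,d,e\}$, contradicting the choice of $(a,b,c,d,e)$ as the lex-smallest strict pentagon. The first step is to pin down where $u$ sits in the ordering: the strict $\oth{F}$-pentagon $(d,b,u,a,e)$ has vertex set $\{a,b,d,e,u\}$, which differs from $\{a,b,c,d,e\}$ only in that $c$ is replaced by $u$, so the minimality of $(a,b,c,d,e)$ forces $c\le u$; since $c\ne u$, we obtain $c<u$.

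Since $a<c<u$, $ac\in E(\oth{G})$, and $au\in\oth{F}\subseteq E(G)$, Observation~\ref{obs:lexbfs} supplies a vertex $x<a$ with $xc\in E(G)$ and $xu\in E(\oth{G})$. Together with $cv\in E(\oth{G})$, this shows that $\{xc,uv\}$ is an edge of $G^*$, so $xc$ lies in the same component as $cd$ in $G^*$ and, being adjacent to $uv\in\oth{F}$, is colored $F$. I then apply Observation~\ref{obs:cdprop} to the path $cd,uv,xc$ in $G^*$ (for which $c_2=c$ and $d_2=x$) and unwind the inductive construction for $z=x$: taking $z'=u$ at step $i=2$ and $z''=d$ at step $i=1$ produces the strict $F$-pentagon $(u,b,d,x,e)$, whose adjacencies and colors follow from Observation~\ref{obs:cdtobe} applied along this length-3 path (giving $ub,ue\in F$ and $xb,xe\in\oth{F}$) together with the alternating $\oth{F}$-circuit $d,b,e,x,u,a,d$ (giving $dx\in F$).

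To close the contradiction, note that the symmetric difference of $\{u,b,d,x,e\}$ and $\{a,b,c,d,e\}$ equals $\{a,c,u,x\}$, whose smallest element is $x$ (since $x<a<c<u$), and $x$ belongs to the new vertex set; hence $(u,b,d,x,e)$ is lex-smaller than $(a,b,c,d,e)$, contradicting minimality. This forces $a>c$, and since $\min\{c,d\}\le c<a$, we conclude $a>\min\{c,d\}$. The main technical hurdle will be the bookkeeping to verify that $u,b,d,x,e$ are pairwise distinct: $u,x\notin\{b,e\}$ by Remark~\ref{rem:coincide}(a); $x\ne d$ because otherwise the path vertices $cd$ and $xc$ of $G^*$ would coincide; $x\ne v$ since $xc\in E(G)$ but $cv\notin E(G)$; and the chain $x<a<c<u$ rules out the remaining collisions.
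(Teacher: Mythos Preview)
Your proof is correct and follows the same strategy as the paper: assume $a<\min\{c,d\}$ for contradiction, use minimality against the pentagon $(d,b,u,a,e)$ to force $c<u$, invoke the Lex-BFS property to produce $x<a$ with $xc\in E(G)$ and $xu\notin E(G)$, and exhibit the lex-smaller strict $F$-pentagon on $\{u,b,d,x,e\}$. The only differences are cosmetic---you route the color computations through Observations~\ref{obs:cdtobe} and~\ref{obs:cdprop} rather than writing out the alternating circuits directly as the paper does, and you use the $4$-cycle $\{xc,uv\}$ instead of the paper's $\{xc,au\}$ to place $xc$ in $F$; one small cleanup: your closing sentence should say the contradiction negates the assumption $a<\min\{c,d\}$ (whence $a>\min\{c,d\}$ since $a\notin\{c,d\}$) rather than ``forces $a>c$'', and the direct reason $x\ne d$ is simply $x<a<\min\{c,d\}\le d$.
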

	\begin{proof}
	Suppose for the sake of contradiction that $a<\min\{c,d\}$. If $u<c$, then $(d,b,u,a,e)$ is a strict $\oth{F}$-pentagon that is lexicographically smaller than $(a,b,c,d,e)$, which is a contradiction. So we can assume that $c<u$, which gives us $a<c<u$. As $ac\in E(\overline{G})$ and $au\in E(G)$, by Observation~\ref{obs:lexbfs}, there exists a vertex $x$ such that $x<a<c<u$, $xc\in E(G)$ and $xu\in E(\overline{G})$. Since $a,u,x,c,a$ is an alternating 4-cycle in which $au\in\oth{F}$, we have that $xc\in F$. Then $b,a,c,x,u,e,b$ is an alternating $F$-circuit (note that $b\neq x$ as $u\in N(b)\setminus N(x)$), and therefore $xb\in\oth{F}$. Symmetrically, we also get that $xe\in\oth{F}$. Then $d,b,e,x,u,a,d$ is an alternating $\oth{F}$-circuit (note that $x\neq d$ as $x<a<\min\{c,d\}$), and therefore we have $xd\in F$. Now $(u,b,x,d,e)$ is a strict $F$-pentagon that is lexicographically smaller than $(a,b,c,d,e)$, which is a contradiction.
	\hfill\qed
	\end{proof}

	Let $c'd'$ be the lexicographically smallest vertex in $C$.

	\begin{observation}\label{obs:ckdkcd}
	$\min\{c',d'\}=\min\{c,d\}$.
    \end{observation}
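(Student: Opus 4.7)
The plan is to argue by contradiction: I assume $\min\{c',d'\}<\min\{c,d\}$ and produce a strict pentagon whose vertex set is lexicographically smaller than $\{a,b,c,d,e\}$, contradicting the choice of $(a,b,c,d,e)$ as the lex-smallest strict pentagon.

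Setting $m=\min\{c',d'\}$, I would first invoke Remark~\ref{rem:cdprop} with $z=m\in\{c',d'\}$ to obtain a strict pentagon $P'=(x_m,b,y_m,m,e)$. I would then check that $m\notin\{a,b,c,d,e\}$: by hypothesis $m<\min\{c,d\}$, and Observation~\ref{obs:amincd} gives $a>\min\{c,d\}$, so $m\ne a,c,d$; and $m\notin\{b,e\}$ follows from Remark~\ref{rem:coincide}(\textit{a}), applied to any path in $G^*$ from $cd$ to $c'd'$ inside the component $C$.

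Writing $P=\{a,b,c,d,e\}$ and $P'=\{x_m,b,y_m,m,e\}$, both sets contain $\{b,e\}$, so $P\setminus P'\subseteq\{a,c,d\}$. Since every element of $\{a,c,d\}$ is at least $\min\{c,d\}$ (using Observation~\ref{obs:amincd} for $a$), we get $\min(P\setminus P')\ge\min\{c,d\}$; on the other hand, $m\in P'\setminus P$ gives $\min(P'\setminus P)\le m<\min\{c,d\}$. A routine lex-ordering lemma then closes the argument: for equal-size finite subsets $T_1,T_2$ of a totally ordered set sharing $S=T_1\cap T_2$, if $\min(T_1\setminus S)<\min(T_2\setminus S)$ then $T_1<T_2$ as sorted tuples. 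Applied to $P'$ and $P$, this yields $P'<P$, contradicting the lex-minimality of $(a,b,c,d,e)$.

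The only step needing any delicacy is the lex-ordering lemma itself. I would prove it by setting $\alpha=\min(T_1\setminus S)$ and noting that every element of $T_2$ strictly less than $\alpha$ must lie in $S$ (because $T_2\setminus S$ has its minimum strictly greater than $\alpha$); so the sorted tuples agree on positions $1$ through $|\{s\in S:s<\alpha\}|$, and at the next position $T_1$ shows $\alpha$ while $T_2$ shows an element strictly greater than $\alpha$ (since $\alpha\notin T_2$). This elementary bookkeeping is the only real obstacle; everything else is a direct application of the already-established observations and remarks.
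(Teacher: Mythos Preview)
Your argument is correct and follows the paper's route: assume $\min\{c',d'\}<\min\{c,d\}$, invoke Remark~\ref{rem:cdprop} at $z=\min\{c',d'\}$ to obtain a strict pentagon $(x_z,b,y_z,z,e)$, and contradict the lex-minimality of $(a,b,c,d,e)$ --- your lex-ordering lemma simply makes explicit what the paper asserts in one line after noting $z<a$. One small omission: since the statement is an equality, you should also record the easy direction $\min\{c',d'\}\leq\min\{c,d\}$, which the paper notes follows immediately from $c'd'\leq cd$ (as $c'd'$ is the lex-smallest vertex of $C$); also, your appeal to Remark~\ref{rem:coincide}(\textit{a}) for $m\notin\{b,e\}$ is unnecessary, since the vertices of any pentagon are pairwise distinct by definition.
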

    \begin{proof}
	We know that $c'd'\leq cd$, and therefore $\min\{c',d'\}\leq\min\{c,d\}$. Suppose that $z=\min\{c',d'\}<\min\{c,d\}$. From Remark~\ref{rem:cdprop}, we have that for each $z\in\{c',d'\}$, there exist vertices $x_z,y_z\in V(G)$ such that $(x_z,b,y_z,z,e)$ is a strict pentagon. Since $a>\min\{c,d\}$ by Observation~\ref{obs:amincd}, we have $a>z$. Then $(x_z,b,y_z,z,e)$ is a lexicographically smaller strict pentagon than $(a,b,c,d,e)$ which is a contradiction.\hfill\qed
	\end{proof}


    \begin{observation}\label{obs:cad}
    $a>\max\{c,d\}$.
    \end{observation}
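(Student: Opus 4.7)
The plan is to argue by contradiction, assuming $a<\max\{c,d\}$. Set $k=\min\{c,d\}$ and $\ell=\max\{c,d\}$; by Observation~\ref{obs:amincd} we have $k<a$, and by assumption $a<\ell$. Since $ka\in E(\overline{G})$ and $k\ell=cd\in E(G)$, I would apply Observation~\ref{obs:lexbfs} to the triple $k<a<\ell$ to obtain a vertex $x<k$ with $xa\in E(G)$ and $x\ell\in E(\overline{G})$.

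Next, I plan to determine the colors of the edges from $x$ to the pentagon vertices. The alternating 4-cycle $x,a,k,\ell,x$ (with edges $xa,k\ell$ and non-edges $ak,x\ell$) forces $xa\in\oth{F}$, since $k\ell=cd\in F$; because $ab,ae\in F$, this already yields $x\notin\{a,b,c,d,e\}$. The alternating $\oth{F}$-paths $x,a,\ell,b$ and $x,a,\ell,e$, combined with~(\ref{altpath}), then give $xb,xe\in E(G)$; and the alternating 4-cycles $x,b,e,\ell,x$ and $x,e,b,\ell,x$ (each of which has its required non-edges $be$ and $x\ell$ in place) together with $b\ell,e\ell\in\oth{F}$ pin down $xb,xe\in F$. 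I expect this color-determination step to be the main obstacle, as it hinges crucially on having $x\ell\in E(\overline{G})$---which is exactly what Lex-BFS supplies---to close the alternating 4-cycles anchoring $xb$ and $xe$ in the color $F$.

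With these edges and colors established, $(x,a,b,\ell)$ is a strict $\oth{F}$-switching path: indeed $xa,b\ell\in\oth{F}$, $ab\in F$, and $x\ell\in E(\overline{G})$. I would then apply Lemma~\ref{lem:pathandpent}(b) with $y'z'=e\ell$, taking $y'=e$ and $z'=\ell$, so that $yz'=a\ell\in E(\overline{G})$, $zy'=be\in E(\overline{G})$, and $w=\ell=z'$. The lemma delivers the $\oth{F}$-pentagon $(\ell,b,a,x,e)$, which is strict because $ax=xa\in\oth{F}$. Its vertex set $\{a,b,\ell,e,x\}$ differs from $\{a,b,c,d,e\}=\{a,b,k,\ell,e\}$ only by replacing $k$ with the strictly smaller $x$, so $(\ell,b,a,x,e)$ is lexicographically smaller than $(a,b,c,d,e)$, contradicting the choice of $(a,b,c,d,e)$ as the lexicographically smallest strict pentagon.
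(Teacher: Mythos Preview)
Your proof is correct, but it takes a different route from the paper's. Both arguments begin identically: assuming $\min\{c,d\}<a<\max\{c,d\}$ and applying the Lex-BFS property (Observation~\ref{obs:lexbfs}) to produce a vertex $x<\min\{c,d\}$ with $xa\in E(G)$ and $x\ell\in E(\overline{G})$ (where $\ell=\max\{c,d\}$). From this point the two proofs diverge. The paper simply observes that the alternating 4-cycle $x,a,k,\ell,x$ places $xa$ in the same component $C$ of $G^*$ as $cd$, so that $c'd'\leq xa$ forces $\min\{c',d'\}\leq x<\min\{c,d\}$, contradicting Observation~\ref{obs:ckdkcd}. Your argument instead determines the colours of $xa,xb,xe$ explicitly and assembles a strict $\oth{F}$-pentagon $(\ell,b,a,x,e)$ that is lexicographically smaller than $(a,b,c,d,e)$. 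Your approach is more self-contained---it does not rely on Observation~\ref{obs:ckdkcd} (and hence not on the machinery of Observation~\ref{obs:cdprop} and Remark~\ref{rem:cdprop})---at the cost of a longer colour-chasing computation; the paper's approach is shorter precisely because it leverages the component-minimum fact already established. One small remark: your justification ``because $ab,ae\in F$, this already yields $x\notin\{a,b,c,d,e\}$'' only directly covers $x\neq b,e$; that $x\neq c,d$ follows instead from $x<\min\{c,d\}$, which you have but do not cite at that point.
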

    \begin{proof}
    Let $\{y,\oth{y}\}=\{c,d\}$ such that $y<\oth{y}$.
    By Observation~\ref{obs:amincd} it is now enough to show that $y<a<\oth{y}$ is not possible. Since $ya\in E(\overline{G})$ and $y\oth{y}\in E(G)$, $y<a<\oth{y}$ implies by Observation~\ref{obs:lexbfs} that there exists $x<y$ such that $xa\in E(G)$ but $x\oth{y}\in E(\overline{G})$. Then $x,a,y,\oth{y},x$ is an alternating 4-cycle, and therefore $xa$ and $y\oth{y}=cd$ belong to the same component $C$ of $G^*$. Thus $c'd'\leq xa$, which implies that $\min\{c',d'\}\leq\min\{x,a\}$. Since $\min\{x,a\}=x<y=\min\{c,d\}$, we now have $\min\{c',d'\}\leq\min\{x,a\}<\min\{c,d\}$. This contradicts Observation~\ref{obs:ckdkcd}.
    \hfill\qed
    \end{proof}
    
	Since $c'd'$ is the lexicographically smallest vertex in $C$, our algorithm would have colored it with the color~1. Therefore, we have $c'd'\in F_1$. Consider a path $c_0d_0,c_1d_1,\ldots,c_kd_k$ in $G^*$, where $c_0=c$, $d_0=d$, $c_k=c'$ and $d_k=d'$, in which for each $i\in\{0,1,\ldots,k-1\}$, $c_id_{i+1},d_ic_{i+1}\in E(\overline{G})$. Suppose that $cd\in F_2$. Then since $c_kd_k=c'd'\in F_1$, we have that $k$ is odd. Now by Remark~\ref{rem:coincide}(\ref{it:parity}), we have that $\{c_0,d_0\}\cap\{c_k,d_k\}=\emptyset$. But this contradicts Observation~\ref{obs:ckdkcd}. Thus we have that $cd\in F_1$. Therefore, $(a,b,c,d,e)$ is a strict $F_1$-pentagon, or in other words, $F=F_1$. Then, our earlier observations imply that $ub,ue\in F_1$ and $au\in F_2$.
	
	Since $ec,ab,ed$ and $bc,ae,bd$ are paths in $G^*$, it follows that $ec,ed$ lie in one component of $G^*$ and $bc,bd$ also lie in one component of $G^*$.
	Let $D$ be the component containing $bc,bd$ and $D'$ the component containing $ec,ed$ in $G^*$. Consider the lexicographically smallest vertex in $D\cup D'$. Let us assume without loss of generality that this vertex is in $D$ (we can interchange the labels of $b$ and $e$ if required). Define $p_0=b$, $q_0=c$. Then in $G^*$, there exists a path $p_0q_0,p_1q_1,\ldots,p_tq_t$ between $bc$ and the lexicographically smallest vertex $p_tq_t$ in $D$. As before, for $0\leq i\leq t-1$, we have $p_iq_{i+1},q_ip_{i+1}\in E(\overline{G})$ and for $0\leq i\leq t$, we have $p_iq_i\in F_1$ when $i$ is odd and $p_iq_i\in F_2$ when $i$ is even. Also, since $p_tq_t$ is the lexicographically smallest vertex in its component in $G^*$, we know that $p_tq_t\in F_1$, which implies that $t$ is odd.

	\begin{observation}\label{obs:bclemma}
	Let $i\in\{0,1,\ldots,t\}$. Then if $i$ is odd, we have
	\begin{enumerate}
	\vspace{-0.06in}
	\renewcommand{\theenumi}{\alph{enumi}}
	\renewcommand{\labelenumi}{(\textit{\theenumi})}
	\item $p_i\notin\{b,e\}$,
	\item $q_i\notin\{a,c,d\}$,
	\item $p_ib,p_ie\in F_1$,
	\item Either $p_i=a$ or $p_ia\in F_2$, and
	\item Either $q_ic\in F_2$ or $q_id\in F_2$.
	\end{enumerate}
	and if $i$ is even, we have
	\begin{enumerate}
	\vspace{-0.06in}
	\renewcommand{\theenumi}{\alph{enumi}}
	\renewcommand{\labelenumi}{(\textit{\theenumi})}
	\item $q_i\notin\{b,e\}$,
	\item $p_i\notin\{a,c,d\}$,
	\item $q_ib,q_ie\in F_2$,
	\item Either $q_i=d$ or $q_id\in F_1$, and
	\item Either $p_iu\in F_1$ or $p_ia\in F_1$.
	\end{enumerate}
	\end{observation}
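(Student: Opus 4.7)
I would prove Observation~\ref{obs:bclemma} by induction on $i$, establishing both parity clauses in lockstep as the index advances along the $G^*$-path. The base case $i = 0$ is direct: with $p_0 = b$ and $q_0 = c$, every assertion in the even-index clause reduces either to a distinctness property of the strict $F_1$-pentagon $(a,b,c,d,e)$, to one of $bc, ce \in F_2$ or $cd \in F_1$, or to the already-established relation $ub \in F_1$.

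For the inductive step, the only new data are the path conditions $p_iq_{i+1}, q_ip_{i+1} \in E(\oth{G})$, so every conclusion must be teased out by alternating 4-cycles and alternating circuits that mix these non-edges with edges supplied by the IH and by the two pentagons $(a,b,c,d,e)$ (strict $F_1$) and $(d,b,u,a,e)$ (strict $F_2$). The transitions $i$ even $\to i+1$ odd and $i$ odd $\to i+1$ even are symmetric under the interchange $F_1 \leftrightarrow F_2$ coupled with the swap of the two pentagons, so it suffices to spell out the first. Distinctness $p_{i+1} \notin \{b,e\}$ is immediate from $q_ib, q_ie \in F_2$ together with $q_ip_{i+1} \in E(\oth{G})$. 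The colour facts $p_{i+1}b, p_{i+1}e \in F_1$ come from an alternating 4-cycle pairing $p_{i+1}q_{i+1} \in F_1$ with pentagon edges via the non-edge $q_ip_{i+1}$, the forced colour being read off from the $F_2$-edges $q_ib, q_ie$. The disjunction ``$p_{i+1} = a$ or $p_{i+1}a \in F_2$'' is obtained by splitting on the IH alternative ``$p_iu \in F_1$ or $p_ia \in F_1$'' and invoking the alternating 4-cycle $p_i, a, p_{i+1}, q_{i+1}$ (respectively its analogue through $u$): since two opposite edges of an alternating 4-cycle cannot both land in $F_1$, the only way out is either $p_{i+1} = a$ or $p_{i+1}a \in F_2$, the $F_2$ alternative being pinned down by a second alternating 4-cycle pairing $p_{i+1}a$ with $cd$ or $au$. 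The remaining clauses $q_{i+1} \notin \{a,c,d\}$ and ``$q_{i+1}c \in F_2$ or $q_{i+1}d \in F_2$'' are obtained in the same spirit, using the edges from $p_i$ to $a, c, d$ produced from the active IH sub-branch by alternating $F_1$-paths in the appropriate pentagon, combined with the non-edge $p_iq_{i+1}$.

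I expect the principal obstacle to be the compound case analysis forced by the IH disjunctions together with the coincidence cases $p_i = a$ and $q_i = d$ (and their odd-index counterparts). Each branching must be threaded through every sub-argument, and in each branch one must check that the conclusion at level $i+1$ still falls into the corresponding disjunction of the statement. In particular, in the sub-branch $p_iu \in F_1$ the needed edges from $p_i$ to $a, c, d$ must be recovered indirectly via the companion pentagon $(d,b,u,a,e)$ rather than the original $(a,b,c,d,e)$; reconciling the two chains of deduction with the claimed disjunction at the next level is the most delicate step.
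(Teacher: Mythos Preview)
Your overall plan---induction on $i$, the parity symmetry realized by swapping $F_1\leftrightarrow F_2$ and the pentagons $(a,b,c,d,e)\leftrightarrow(d,b,u,a,e)$, and a case split on the disjunctive clause of the inductive hypothesis---is exactly the structure the paper uses. Two of the specific mechanisms you describe, however, do not work as stated and hide a genuine difficulty.

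First, the order in which you propose to establish (c) and (d) is inverted. There is no alternating $4$-cycle that forces $p_{i+1}b,p_{i+1}e\in F_1$ using only $p_{i+1}q_{i+1}\in F_1$, the non-edge $q_ip_{i+1}$, and $q_ib,q_ie\in F_2$: any such $4$-cycle would need a second non-edge incident with $q_{i+1}$ or with $b$ that you do not yet possess. What the paper does is prove (d) first (either $p_{i+1}=a$ or $p_{i+1}a\in F_2$, or the analogous statement with $u$), and only then obtain (c) via the alternating $F_2$-circuit $p_{i+1},x,d,z,\bar z,q_i,p_{i+1}$ with $x\in\{a,u\}$ and $\{z,\bar z\}=\{b,e\}$. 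This is a $6$-term circuit, not a $4$-cycle, and it genuinely needs the edge $p_{i+1}x\in F_2$ as input.

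Second, and more seriously, the sub-branch ``$p_iu\in F_1$ but $p_ia\notin F_1$'' is not handled by the ``analogue through $u$'' of your $4$-cycle argument. Running your $4$-cycle with $u$ in place of $a$ yields information about $p_{i+1}u$, not $p_{i+1}a$, and there is no short $4$-cycle that converts one into the other. In the paper this branch is substantially harder: one must reach back to the inductive hypothesis at the previous odd level to find some $y\in\{c,d\}$ with $q_{i-1}y\in F_2$, use it to show that $p_ia$ is actually a non-edge, rule out the coincidences $q_i=d$, $q_{i+1}\in\{a,c\}$ and $p_{i+1}=u$ one by one, derive $p_{i+1}u\in F_2$ by a circuit, then obtain (c), and finally conclude that necessarily $p_{i+1}=a$ (so here (d) is satisfied through the equality clause, not through $p_{i+1}a\in F_2$). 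Your closing paragraph correctly flags this sub-branch as the delicate one, but the mechanism you sketch for it does not reach the conclusion; you will need the two-level look-back and the chain of exclusions as in the paper.
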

\begin{proof}
	We shall prove this by induction on $i$. If $i=0$, then the statement of the lemma can be easily seen to be true. Suppose that $i>0$. We give a proof for the case when $i$ is odd (the case when $i$ is even is symmetric and can be proved using similar arguments). By the induction hypothesis,
	$q_{i-1}b,q_{i-1}e\in F_2$, and therefore since $p_iq_{i-1}\in E(\overline{G})$, we have $p_i\notin \{b,e\}$. We now prove the following claim.
	\begin{myclaim}\label{clm:equiv}
		For $x\in \{a,u\}$, if $p_i=x $ or $p_ix\in F_2$, then $p_ib,p_ie\in F_1$. 
	\end{myclaim}
	If $p_i=x$ then there is nothing to prove as we already know that $ab,ae,ub,ue\in F_1$. So assume that $p_ix\in F_2$. Let $\{z,\bar{z}\}=\{b,e\}$. Then $p_i,x,d,z,\bar{z},q_{i-1},p_i$ is an alternating $F_2$-circuit (recall that $p_i\notin\{b,e\}$), which implies that $p_iz\in F_1$. We thus get that $p_ib,p_ie\in F_1$. This proves the claim.
	
	\medskip
	By the induction hypothesis we know that either $p_{i-1}a\in F_1$ or $p_{i-1}u\in F_1$, and also that $p_{i-1}\notin\{c,d\}$. First suppose that $p_{i-1}a\in F_1$. This implies that $q_i\neq a$. Let $\{y,\bar{y}\}=\{c,d\}$. Then we have that $p_{i-1},a,\bar{y},y$ is an alternating $F_1$-path implying that $p_{i-1}y\in E(G)$. Thus, $p_{i-1}c,p_{i-1}d\in E(G)$. This implies that $q_i\notin \{c,d\}$. By the induction hypothesis we also have that $q_{i-1}y\in F_1$ for some $y\in\{c,d\}$. Then $q_i,p_i,q_{i-1},y,a,p_{i-1},q_i$ is an alternating $F_1$-circuit, which implies that $q_iy\in F_2$. If $p_i\neq a$, then $p_i,q_i,p_{i-1},a,y,q_{i-1},p_i$ is an alternating $F_1$-circuit, implying that $p_ia\in F_2$. Since we have either $p_i=a$ or $p_ia\in F_2$ we are done by Claim~\ref{clm:equiv}. 
	
	Therefore we can assume that $p_{i-1}a\notin F_1$. If $i=1$, then we know that $p_{i-1}a=ba\in F_1$, so we can assume that $i\geq 2$. By the induction hypothesis, we have that for some $y\in\{c,d\}$, $q_{i-2}y\in F_2$. Therefore if $p_{i-1}a\in E(G)$, then we have that $p_{i-1},a,y,q_{i-2},p_{i-1}$ is an alternating 4-cycle in which $q_{i-2}y\in F_2$, implying that $p_{i-1}a\in F_1$ which is a contradiction. Since we know that $p_{i-1}\neq a$ by the induction hypothesis, we can assume that $p_{i-1}a\in E(\overline{G})$. Note that since $p_{i-1}a\notin F_1$, we have by the induction hypothesis that $p_{i-1}u \in F_1$. If $q_{i-1}=d$, then $p_{i-1},(q_{i-1}=d),u,a,p_{i-1}$ is an alternating 4-cycle whose opposite edges both belong to $F_2$, which is a contradiction. Therefore by the induction hypothesis we have $q_{i-1}d \in F_1$. If $q_i=a$ (resp. $q_i=c$) then $p_i,(q_i=a),d,q_{i-1},p_i$ (resp. $p_{i-1},u,d,(c=q_i),p_{i-1}$) is an alternating 4-cycle whose opposite edges are both in $F_1$, which is a contradiction. Therefore, $q_i\notin\{a,c\}$. If $p_ia\in F_2$ then we have that $a,p_i,q_{i-1},p_{i-1},a$ is an alternating 4-cycle whose opposite edges are both in $F_2$, which is a contradiction. This implies that $p_ia\notin F_2$ and therefore $p_i\neq u$. Then $p_i,q_i,p_{i-1},u,d,q_{i-1},p_i$ is an alternating $F_1$-circuit, implying that $p_iu\in F_2$. Therefore by Claim~\ref{clm:equiv}, we have that $p_ib,p_ie\in F_1$. Now if $a\neq p_i$, then $p_i,b,e,a,d,q_{i-1},p_i$ is an alternating $F_1$-circuit, which implies that $p_ia\in F_2$ which is a contradiction. This implies that $a= p_i$, which further implies that $q_i\neq d$. Then $q_i,p_i,q_{i-1},d,u,p_{i-1},q_i$ is an alternating $F_1$-circuit, which implies that $q_id\in F_2$ and we are done. 
	\hfill\qed
\end{proof}

    \begin{observation}\label{obs:apdq}
	For each even $i\in \{0,1,2,\ldots,t\}$, either $ap_i\in E(G)$ or both $dq_{i-1},dq_{i+1}\in E(G)$.
	\end{observation}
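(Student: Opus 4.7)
The plan is to handle $i=0$ separately and, for even $i\geq 2$, to split into two cases driven by Observation~\ref{obs:bclemma}(d), which states that for even $i$, either $q_i = d$ or $q_i d\in F_1$. For $i=0$, the first disjunct holds immediately: $p_0=b$ and $ab\in F_1\subseteq E(G)$, so $ap_0\in E(G)$.

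In the case $q_i\neq d$ (so that $q_i d\in F_1$), for each $j\in\{i-1,i+1\}$ I will verify that $q_j,p_j,q_i,d$ is an alternating $F_1$-path: $p_jq_j\in F_1$ because $j$ is odd, $p_jq_i\in E(\overline{G})$ from the fact that $p_jq_j$ and $p_iq_i$ are consecutive edges on the path $p_0q_0,p_1q_1,\ldots,p_tq_t$ in $G^*$, $q_id\in F_1$ by the case assumption, and $q_j\neq d$ by Observation~\ref{obs:bclemma}(b) for odd $j$. The alternating-path observation then yields $q_jd\in E(G)$, so that both $dq_{i-1}$ and $dq_{i+1}$ lie in $E(G)$, establishing the second disjunct.

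In the remaining case $q_i=d$, we have $p_iq_i=p_id\in F_2$. Here I will use the vertex $u$ that was introduced earlier in the section (chosen so that $cd$ and $uv$ are the opposite edges of an alternating 4-cycle in $G$, yielding $du\in E(\overline{G})$ and, via the derivations already completed in the section, $au\in F_2$). The sequence $p_i,d,u,a$ is then an alternating $F_2$-path, because $p_id\in F_2$, $du\in E(\overline{G})$, $ua\in F_2$, and $p_i\neq a$ by Observation~\ref{obs:bclemma}(b) for even $i$. Consequently $p_ia\in E(G)$, i.e., $ap_i\in E(G)$, which is the first disjunct.

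I do not expect any substantive obstacle: the whole argument is a case split followed by a single alternating path in each case. The one small piece of insight is that the obstruction to the $F_1$-path argument (namely $q_i=d$, which collapses the would-be edge $q_id$ to a self-loop) is precisely what makes the detour $p_i\to d\to u\to a$ through $F_2$-edges available, delivering the first disjunct instead.
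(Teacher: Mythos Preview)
Your proof is correct and is essentially the same argument as the paper's, just phrased directly rather than by contradiction. Where you build the alternating $F_1$-path $q_j,p_j,q_i,d$ (resp.\ the alternating $F_2$-path $p_i,d,u,a$) and invoke the alternating-path observation to obtain $q_jd\in E(G)$ (resp.\ $p_ia\in E(G)$), the paper instead assumes $dq_j\notin E(G)$ (resp.\ $ap_i\notin E(G)$) and derives the alternating 4-cycle $p_j,q_j,d,q_i,p_j$ with both opposite edges in $F_1$ (resp.\ $(d=q_i),p_i,a,u,d$ with both opposite edges in $F_2$), contradicting the valid 3-partition; the two formulations are logically identical.
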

	\begin{proof}
    Suppose that there exists an even $i\in \{0,1,2,\ldots,t\}$ and $j\in\{i-1,i+1\}$ such that $ap_i,dq_j\notin E(G)$. By Observation~\ref{obs:bclemma}, we know that $p_i\neq a$ and $q_j\neq d$. So we have $ap_i,dq_j\in E(\overline{G})$. Now if $d\neq q_i$, then we have by Observation~\ref{obs:bclemma} that $q_id\in F_1$. Then $p_j,q_j,d,q_i,p_j$ is an alternating 4-cycle whose both opposite edges belong to $F_1$, which is a contradiction. Therefore we can assume that $d=q_i$. Then $(d=q_i),p_i,a,u,(d=q_i)$ is an alternating 4-cycle whose opposite edges both belong to $F_2$, which is again a contradiction.\hfill\qed
	\end{proof}
	
	Recall that $D'$ is the component containing $ec$ in $G^*$.
	
	\begin{observation}\label{obs:pathab}
	For any odd $i\in\{0,1,\ldots,t\}$, if $ap_{i-1}\in E(G)$, then for each $y\in\{c,d\}$ for which $yq_i\in E(G)$, we have $yq_i\in D'$. On the other hand, if $ap_{i-1}\notin E(G)$, then $dq_i\in D'$.
	\end{observation}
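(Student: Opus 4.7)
My approach is to prove both parts of the observation simultaneously by induction on odd $i\in\{1,3,\ldots\}\cap\{0,\ldots,t\}$, using carefully chosen alternating $4$-cycles in $G$ to transport edges between the components $D$ and $D'$ of $G^*$. The seed fact I would record first is that $ab\in D'$: since $ac,eb\in E(\overline{G})$ (from the pentagon) while $ab,ec\in E(G)$, the pair $\{ab,ec\}$ is an alternating $4$-cycle, giving an edge of $G^*$ between $ab$ and $ec\in D'$.

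For Case~1 ($ap_{i-1}\in E(G)$), let $y\in\{c,d\}$ satisfy $yq_i\in E(G)$. Since $ya\in E(\overline{G})$ (as $y\in\{c,d\}$ and $ac,ad\in E(\overline{G})$) and $q_ip_{i-1}\in E(\overline{G})$ (from the path $p_0q_0,\ldots,p_tq_t$ in $G^*$), and since $y,q_i,a,p_{i-1}$ are pairwise distinct by Observation~\ref{obs:bclemma}, the pair $\{yq_i,ap_{i-1}\}$ is an edge of $G^*$; so it suffices to show $ap_{i-1}\in D'$. For $i=1$ this is just $ap_0=ab\in D'$. For $i\geq 3$, the inductive hypothesis applied to $i-2$ supplies some $y'\in\{c,d\}$ with $q_{i-2}y'\in E(G)\cap D'$: if $ap_{i-3}\in E(G)$, take any $y'\in\{c,d\}$ with $q_{i-2}y'\in F_2$ (one exists by Observation~\ref{obs:bclemma}); if $ap_{i-3}\notin E(G)$, take $y'=d$, noting that $dq_{i-2}\in E(G)$ by Observation~\ref{obs:apdq} applied to the even index $i-3$. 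Then $\{ap_{i-1},q_{i-2}y'\}$ is an alternating $4$-cycle (non-edges $ay'$ and $p_{i-1}q_{i-2}$, with four pairwise distinct endpoints by Observation~\ref{obs:bclemma}), so $ap_{i-1}$ lies in the same $G^*$-component as $q_{i-2}y'\in D'$, finishing Case~1.

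For Case~2 ($ap_{i-1}\notin E(G)$), which forces $i\geq 3$ because $ab\in E(G)$, the subcase analysis inside the proof of Observation~\ref{obs:bclemma} for $p_{i-1}a\notin F_1$ yields $p_i=a$, $p_{i-1}u\in F_1$, and $q_id\in F_2$; and Observation~\ref{obs:apdq} applied to the even index $i-1$ gives $dq_{i-2},dq_i\in E(G)$. Using $du\in E(\overline{G})$ (from the choice of $u$) together with $q_ip_{i-1}\in E(\overline{G})$, the pair $\{dq_i,up_{i-1}\}$ is an alternating $4$-cycle, and similarly so is $\{up_{i-1},dq_{i-2}\}$. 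The inductive hypothesis applied to $i-2$ delivers $dq_{i-2}\in D'$ (directly from its Case~2 conclusion, or from its Case~1 since $dq_{i-2}\in E(G)$); hence $up_{i-1}\in D'$, and therefore $dq_i\in D'$. The principal bookkeeping obstacle is verifying the four-vertex distinctness in each alternating $4$-cycle and selecting the correct branch of the inductive hypothesis; for instance, in Case~2 one must rule out $q_i=u$ (otherwise $up_{i-1}=q_ip_{i-1}\in E(\overline{G})$ would contradict $up_{i-1}\in F_1$) and $u=q_{i-2}$ (by the same argument).
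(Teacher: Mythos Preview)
Your proof is correct and follows essentially the same inductive route as the paper: both argue by induction on odd $i$, building a path in $G^*$ of the form $y'q_{i-2},\,ap_{i-1},\,yq_i$ in Case~1 and $dq_{i-2},\,up_{i-1},\,dq_i$ in Case~2, anchored at $ab\in D'$ via the alternating 4-cycle $\{ab,ec\}$. The only notable difference is that in Case~2 you dig into the internal case analysis of the proof of Observation~\ref{obs:bclemma} to extract $p_i=a$ and $q_id\in F_2$, neither of which you actually use afterwards; the paper instead reads $up_{i-1}\in E(G)$ directly off the \emph{statement} of Observation~\ref{obs:bclemma} (part~(e) for the even index $i-1$: since $ap_{i-1}\notin E(G)$ we cannot have $p_{i-1}a\in F_1$, so $p_{i-1}u\in F_1$), which is cleaner and avoids relying on proof internals.
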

	\begin{proof}
	We prove this by induction on $i$. When $i=1$, we have $ap_0=ab\in E(G)$ and for each $y\in\{c,d\}$ such that $yq_1\in E(G)$, we have that $ec,(ab=ap_0),yq_1$ is a path in $G^*$. We thus have the base case.
	We shall now prove the claim for $i\geq 3$ assuming that the claim is true for $i-2$.
	Suppose that $ap_{i-1}\in E(G)$. By Observation~\ref{obs:bclemma}, there exists $y''\in\{c,d\}$ such that $y''q_{i-2}\in E(G)$. By the induction hypothesis, either $y''q_{i-2}\in D'$ or $dq_{i-2}\in D'$ (depending upon whether $ap_{i-3}$ is an edge or not). Thus in any case, we have that there exists $y'\in\{c,d\}$ such that $y'q_{i-2}\in D'$. Now for each $y\in\{c,d\}$ such that $yq_i\in E(G)$, since $y'q_{i-2},ap_{i-1},yq_i$ is a path in $G^*$, we get that $yq_i\in D'$, so we are done. Next, suppose that $ap_{i-1}\notin E(G)$. Then by Observation~\ref{obs:bclemma}, we have $up_{i-1}\in E(G)$ and by Observation~\ref{obs:apdq}, we have $dq_{i-2},dq_i\in E(G)$. We then have by the induction hypothesis that $dq_{i-2}\in D'$. Since $dq_{i-2},up_{i-1},dq_i$ is a path in $G^*$, we have $dq_i\in D'$.\hfill\qed
	\end{proof}
   
   Recall that $C$ is the component of $G^*$ containing the vertex $cd$.
\begin{observation} \label{obs:apjcd}
For each odd $i\in\{0,1,\ldots,t\}$, if $a\neq p_i$ then $ap_i\in C$.
\end{observation}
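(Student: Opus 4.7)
My plan is to prove this by induction on the odd index $i$.

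For the base case $i = 1$, I would directly exhibit the alternating 4-cycle $(a, p_1, c, d, a)$ in $G$: its opposite edges $ap_1$ and $cd$ both lie in $E(G)$ (using $p_1 a \in F_2$ from Observation~\ref{obs:bclemma}), while $ad \in E(\overline{G})$ comes from the pentagon and $p_1 c \in E(\overline{G})$ from the $G^*$-edge $\{p_0 q_0, p_1 q_1\} = \{bc, p_1 q_1\}$, which forces $q_0 p_1 = c p_1 \in E(\overline{G})$. The four vertices are pairwise distinct: $a \neq p_1$ is assumed, $a \notin \{c, d\}$ is given by the pentagon, $p_1 \neq c$ follows from $p_1 c \in E(\overline{G})$, and $p_1 \neq d$ because Observation~\ref{obs:bclemma} puts $p_1 b \in F_1$ whereas the pentagon puts $db \in F_2$. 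Thus $\{ap_1, cd\} \in E(G^*)$ and $ap_1 \in C$.

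For the inductive step with odd $i \geq 3$, I would first observe that $q_{i-1} \neq a$: the $G^*$-edge $\{p_{i-1}q_{i-1}, p_i q_i\}$ forces $p_i q_{i-1} \in E(\overline{G})$, which would contradict $p_i a \in F_2$ if $q_{i-1} = a$. The main case is $a p_{i-1} \in E(\overline{G})$. Here, both $(a, p_i, q_{i-1}, p_{i-1}, a)$ and, assuming $a \neq p_{i-2}$, $(a, p_{i-2}, q_{i-1}, p_{i-1}, a)$ are alternating 4-cycles in $G$: one checks pairwise distinctness of the four vertices using $a \neq p_j$, $a \neq q_{i-1}$, $p_{i-1} \notin \{a, c, d\}$ (Observation~\ref{obs:bclemma}), and the $G^*$-adjacency forcing $\{p_{i-1}, q_{i-1}\} \cap \{p_j, q_j\} = \emptyset$ for $j \in \{i - 2, i\}$. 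These 4-cycles exhibit a $G^*$-path $a p_{i-2} \sim p_{i-1} q_{i-1} \sim a p_i$; combined with the inductive hypothesis $a p_{i-2} \in C$, this gives $a p_i \in C$.

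The hard part will be the two residual cases: (i) $a = p_{i-2}$, where the inductive hypothesis has no foothold at $i - 2$ because $a p_{i-2}$ is not an edge, and (ii) $a p_{i-1} \in E(G)$, where the alternating 4-cycle $(a, p_i, q_{i-1}, p_{i-1}, a)$ collapses. For (i), I would try stepping further back along the path, using the $G^*$-adjacency $\{p_{i-2}q_{i-2}, p_{i-1}q_{i-1}\} = \{a q_{i-2}, p_{i-1}q_{i-1}\}$ together with the dichotomy $q_{i-2} c \in F_2$ or $q_{i-2} d \in F_2$ from Observation~\ref{obs:bclemma} to route back to $cd$ in $C$. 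For (ii), the extra information from Observation~\ref{obs:bclemma} that $p_{i-1}u \in F_1$ or $p_{i-1}a \in F_1$, combined with the already-established edges $au \in F_2$ and $\{cd, uv\} \in E(G^*)$, would let me construct an alternative alternating 4-cycle (for instance involving $uv$ or $p_{i-1}u$) that places $a p_i$ in $C$. Ensuring that every $G^*$-path produced ends inside $C$, rather than drifting into $D$ or $D'$, is the main technical obstacle.
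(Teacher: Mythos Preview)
Your base case is correct and matches the paper's. The inductive step, however, takes a harder route than necessary and leaves genuine gaps.

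The paper does not route through $p_{i-1}q_{i-1}$; it routes through $dq_{i-1}$. The point is that $da\in E(\overline{G})$ is guaranteed by the pentagon $(a,b,c,d,e)$, so the alternating 4-cycle $a,p_i,q_{i-1},d,a$ always exists (once one checks $d\neq q_{i-1}$), giving $\{ap_i,dq_{i-1}\}\in E(G^*)$ with no hypothesis on $ap_{i-1}$. Similarly $\{ap_{i-2},dq_{i-1}\}\in E(G^*)$ via $a,p_{i-2},q_{i-1},d,a$. This eliminates your case~(ii) entirely: whether or not $ap_{i-1}$ is an edge is irrelevant. The only splits the paper needs are (a)~$d=q_{i-1}$, in which case $\{ap_i,cd\}\in E(G^*)$ directly via $a,p_i,(q_{i-1}=d),c,a$, and (b)~$a=p_{i-2}$, handled by the path $ap_i,\,dq_{i-1},\,u(p_{i-2}=a),\,cd$ in $G^*$ --- here the key is that $ua\in C$ was already established in the setup preceding Observation~\ref{obs:amincd} (since $\{ua,cd\}\in E(G^*)$), and $\{ua,dq_{i-1}\}\in E(G^*)$ via $u,a,q_{i-1},d,u$ using $aq_{i-1}=p_{i-2}q_{i-1}\in E(\overline{G})$ and $du\in E(\overline{G})$.

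Your sketches for the residual cases are not proofs. For~(i) you propose using the dichotomy $q_{i-2}c\in F_2$ or $q_{i-2}d\in F_2$, but it is not clear how this yields a $G^*$-path from $ap_i$ into $C$; the paper's use of $ua$ is both simpler and already available. For~(ii) you gesture at ``an alternative alternating 4-cycle involving $uv$ or $p_{i-1}u$'' without specifying one, and as noted above the case is unnecessary once you pivot to $dq_{i-1}$. The fix is short: replace $p_{i-1}q_{i-1}$ by $dq_{i-1}$ throughout the induction step, and for $a=p_{i-2}$ use the edge $ua\in C$.
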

\begin{proof}
We prove this by induction on $i$. The base case when $i=1$ is true since if $a\neq p_1$ then by Observation~\ref{obs:bclemma}, $ap_1\in E(G)$, and since $\{ap_1,(q_0=c)d\}\in E(G^*)$, we have $ap_1\in C$. Assume that $i\geq 3$ and the claim is true for $i-2$. Suppose that $a\neq p_i$. Then we have $ap_i\in E(G)$ by Observation~\ref{obs:bclemma}. If $d=q_{i-1}$ then we have $\{ap_i,c(q_{i-1}=d)\}\in E(G^*)$, so we have $ap_i\in C$. So we assume that $d\neq q_{i-1}$. Then by Observation~\ref{obs:bclemma}, we have that $dq_{i-1}\in E(G)$. By the induction hypothesis, we have that either $ap_{i-2}\in C$ or $a=p_{i-2}$. If $ap_{i-2}\in C$, then since $ap_i,dq_{i-1},ap_{i-2}$ is a path in $G^*$, we have $ap_i\in C$. On the other hand, if $a=p_{i-2}$ then we again have $ap_i\in C$ as $ap_i,dq_{i-1},u(p_{i-2}=a),cd$ is a path in $G^*$.
\hfill\qed
\end{proof}

Recall that $t$ is odd, $p_tq_t\in D$, and $p_tq_t$ is the lexicographically smallest vertex in $D\cup D'$.

    \begin{observation}\label{obs:ptcd}
    $p_t< \min\{c,d\}$ 
    \end{observation}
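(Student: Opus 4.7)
The plan is to argue by contradiction: suppose $p_t \geq \min\{c,d\}$ and set $y = \min\{c,d\}$. First I would establish $p_t \notin \{b, c, d, e\}$. Observation~\ref{obs:bclemma} rules out $p_t \in \{b, e\}$, and if $p_t = c$ then $p_tb = cb$ lies in $F_2$ (since $bc \in F_2$), contradicting $p_tb \in F_1$ from Observation~\ref{obs:bclemma}; similarly $p_t \neq d$. Hence $p_t > y$ strictly.

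Next I would use the lex-smallest property of $p_tq_t$ in $D \cup D'$. Since $bc, bd \in D$ and $ec, ed \in D'$, the lex inequalities $\{p_t, q_t\} \leq \{b,c\}, \{b,d\}, \{e,c\}, \{e,d\}$ force $\min\{p_t, q_t\} \leq \min\{b, c, d, e\} \leq y$. If $p_t < q_t$, then $p_t = \min\{p_t, q_t\} \leq \min\{b, c, d, e\}$, and since $p_t \notin \{b, c, d, e\}$ this becomes strict, giving $p_t < y$, a contradiction. Hence $q_t < p_t$; the same lex inequalities then give $q_t \leq \min\{b, c, d, e\}$, and since Observation~\ref{obs:bclemma} gives $q_t \notin \{c, d\}$, we deduce $q_t < c$ and $q_t < d$, i.e., $q_t < y$.

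For the main contradiction I would invoke Observation~\ref{obs:lexbfs}. By Observation~\ref{obs:bclemma}, $q_tc \in F_2$ or $q_td \in F_2$, so after possibly relabeling $c$ and $d$ I may assume $q_tc \in F_2 \subseteq E(G)$. In the model sub-case where $y = c$ (so $c < d$), $p_t < d$, and $cp_t \in \overline{G}$, the triple $c < p_t < d$ satisfies $cp_t \in \overline{G}$ and $cd \in E(G)$, so Observation~\ref{obs:lexbfs} yields $x < c$ with $xp_t \in E(G)$ and $xd \in \overline{G}$. Then $x, p_t, c, d, x$ is an alternating 4-cycle (non-edges $xd$ and $p_tc$), so $xp_t$ is adjacent in $G^*$ to $cd$ and lies in the component $C$ containing $cd$. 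Together with Observation~\ref{obs:ckdkcd} this gives $\min\{c', d'\} \leq \min\{x, p_t\} = x < c = y = \min\{c', d'\}$, a contradiction.

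The remaining sub-cases --- namely $cp_t \in E(G)$, $p_t > d$, or the configuration where after relabeling $y = d$ rather than $y = c$ --- would be handled by applying Observation~\ref{obs:lexbfs} to other triples, typically bringing in the pentagon's non-edges $ac, ad, be$ together with Observations~\ref{obs:amincd} and~\ref{obs:cad} (which give $a > \max\{c, d\}$), or using $b$ or $e$ as the larger Lex-BFS anchor. Each such application aims to produce an alternating 4-cycle placing a vertex with small minimum either in $C$ (contradicting Observation~\ref{obs:ckdkcd}) or in $D \cup D'$ (contradicting the lex-minimality of $p_tq_t$). The main obstacle will be organizing this branching so that every sub-case yields a contradiction; this is the most delicate part of the proof.
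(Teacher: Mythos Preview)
Your opening reduction---showing $p_t\notin\{c,d\}$, then reducing to the case $q_t<y$ where $y=\min\{c,d\}$---matches the paper. After that, however, your argument diverges from the paper's and is left genuinely incomplete.

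The paper does not branch on which of $q_tc,q_td$ lies in $F_2$, nor does it attempt a multi-case Lex-BFS analysis. Instead it makes exactly one case distinction: whether $yq_t\in E(G)$ or not. If $yq_t\in E(G)$, the paper invokes Observation~\ref{obs:pathab} (which you never use) to conclude that either $yq_t\in D'$ or else $\bar yq_t\in D'$ and $ap_{t-1}\in E(\overline G)$; the first alternative gives $p_tq_t\le yq_t$ hence $p_t<y$ directly, and the second leads to an alternating 4-cycle $a,p_t,q_{t-1},p_{t-1},a$ with both opposite edges in $F_2$. If $yq_t\notin E(G)$, a single application of Observation~\ref{obs:lexbfs} to the triple $q_t<y<p_t$ (with $q_ty\in E(\overline G)$ and $q_tp_t\in E(G)$) produces $x<q_t$ with $xy\in E(G)$, $xp_t\in E(\overline G)$, so $xy\in D$ and $xy<p_tq_t$, contradicting minimality.

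Your ``model sub-case'' assumes $cp_t\in E(\overline G)$ and $p_t<d$, neither of which you justify, and you explicitly defer ``the remaining sub-cases'' as ``the most delicate part.'' That delicacy is real: without Observation~\ref{obs:pathab} you have no mechanism to handle the case $yq_t\in E(G)$ cleanly, and your proposed relabeling of $c$ and $d$ is not free here (the path $p_0q_0,\ldots,p_tq_t$ was defined with $q_0=c$, and the vertex $u$ was chosen relative to the fixed labels). So as written the proposal is a sketch with a substantive gap rather than a proof; the missing ingredient is precisely the use of Observation~\ref{obs:pathab} to locate $yq_t$ (or $\bar yq_t$) inside $D'$.
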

    \begin{proof}
    Let $\{y,\bar{y}\}=\{c,d\}$, where $y<\bar{y}$. Note that $p_t\notin \{c,d\}$, since by Observation~\ref{obs:bclemma}, $p_tb\in F_1$, but we know that $cb,db\in F_2$. By the same lemma, we also have that $q_t\notin \{c,d\}$. Therefore as $\min\{p_t,q_t\}\leq \min\{c,d\}$ (since $p_tq_t<bc,bd$), we have that $\min\{p_t,q_t\}< \min\{c,d\}=y$. Now if $p_t=\min\{p_t,q_t\}$ then we are done. Therefore let us assume that $q_t=\min\{p_t,q_t\}$, and so $q_t<y$.
    
    Suppose that $yq_t\in E(G)$. If $yq_t\notin D'$, then by Observation~\ref{obs:pathab}, we have that $ap_{t-1}\notin E(G)$ and $\oth{y}q_t\in D'$. By Observation~\ref{obs:bclemma}, we know that $p_{t-1}\neq a$, which implies that $ap_{t-1}\in E(\overline{G})$. By our choice of $p_tq_t$, we now have that $p_tq_t<\oth{y}q_t$, which implies that $p_t<\oth{y}$. Now by Observation~\ref{obs:cad}, $p_t\neq a$, which implies by Observation~\ref{obs:bclemma} that $p_ta\in F_2$. Then $a,p_t,q_{t-1},p_{t-1},a$ is an alternating 4-cycle in which both opposite edges belong to $F_2$, which is a contradiction. We can thus conclude that $yq_t\in D'$. Then by our choice of $p_tq_t$, we have that $p_t<y$, and we are done. So we assume that $yq_t\notin E(G)$. 
        
    Recall that $q_t<y$ (and therefore $yq_t\in E(\overline{G})$). Now if $y<p_t$ then we have $q_t<y<p_t$ where $q_ty\notin E(G)$ and $q_tp_t\in E(G)$. By Observation~\ref{obs:lexbfs}, this implies that there exists $x<q_t$ such that $xy\in E(G)$ and $xp_t\notin E(G)$ (which means that $xp_t\in E(\overline{G})$ since $x<p_t$). Then $\{xy,p_tq_t\}\in E(G^*)$, which implies that $xy\in D$. But $xy<p_tq_t$, which contradicts our choice of $p_tq_t$. We can thus conclude that $p_t<y$ (recall that $p_t\neq y$ as $p_t\notin\{c,d\}$) and we are done.
    \hfill\qed
   \end{proof}

Note that by Observation~\ref{obs:ptcd} and Observation~\ref{obs:cad} we have that $a\neq p_t$. Then by Observation~\ref{obs:apjcd}, we have $ap_t\in C$. By Observation~\ref{obs:ptcd} and Observation~\ref{obs:ckdkcd}, $p_t<\min\{c',d'\}$, which implies that $ap_t<c'd'$. This is a contradiction to our choice of $c'd'$. Therefore we have the following lemma.
 
\begin{lemma}\label{lem:nostrictpent}
There are no strict pentagons in $G$ (with respect to $(F_0,F_1,F_2)$).
\end{lemma}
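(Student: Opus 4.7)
The plan is to proceed by contradiction: assume some strict pentagon exists and let $(a,b,c,d,e)$ be the lexicographically smallest one with respect to the Lex-BFS ordering $<$, WLOG a strict $F$-pentagon. I will chain together the observations developed in this section to produce either a lexicographically strictly smaller strict pentagon, or a lexicographically smaller vertex in a key component of $G^*$ than one that has been separately minimized---either outcome contradicts a minimality assumption already in force.

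First I would fix, as in the setup preceding the lemma, a neighbor $uv$ of $cd$ in $G^*$ with $u \neq a$, so that Observation~\ref{obs:apentagon} and Observation~\ref{obs:cdtobe} give $au \in \oth{F}$ and $ub, ue \in F$. Two Lex-BFS-based arguments (Observations~\ref{obs:amincd} and~\ref{obs:cad}) then pin down $a > \max\{c,d\}$. Picking $c'd'$ as the lex-smallest vertex in the $G^*$-component $C$ of $cd$, Remark~\ref{rem:cdprop} yields strict pentagons based at each endpoint of $c'd'$; since $a > \min\{c,d\}$, minimality of $(a,b,c,d,e)$ forces $\min\{c',d'\} = \min\{c,d\}$ (Observation~\ref{obs:ckdkcd}). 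Now $c'd' \in F_1$ by Phase~II, and Remark~\ref{rem:coincide}(\ref{it:parity}) rules out an odd-length $G^*$-path from $cd$ to $c'd'$, so $cd \in F_1$ and $F = F_1$.

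The harder half is to locate the component $D$ of $G^*$ containing $bc, bd$ and the component $D'$ containing $ec, ed$, take (WLOG after interchanging $b$ and $e$ if needed) the lex-smallest vertex $p_tq_t$ of $D \cup D'$ to lie in $D$, and walk along a $G^*$-path $p_0q_0 = bc,\, p_1q_1,\, \ldots,\, p_tq_t$. The technical heart is Observation~\ref{obs:bclemma}, which inductively maintains, parity-by-parity in $i$, control over whether $p_i \in \{a,b,e\}$ or $q_i \in \{c,d\}$, and the $F_1/F_2$ classification of the edges among $p_ib, p_ie, q_ib, q_ie, p_ia, p_iu, q_ic, q_id$. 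Observations~\ref{obs:apdq} and~\ref{obs:pathab} propagate the dichotomies ``$ap_i$ is an edge or $dq_{i\pm1}$ is'' and ``$yq_i \in D'$ for the right $y$'' along the path, and Observation~\ref{obs:apjcd} propagates the membership ``$ap_i \in C$'' down the path.

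To close, Observation~\ref{obs:ptcd} (a final Lex-BFS argument exploiting the minimality of $p_tq_t$ in $D \cup D'$) gives $p_t < \min\{c,d\}$; since $a > \max\{c,d\}$, we have $a \neq p_t$, so Observation~\ref{obs:apjcd} yields $ap_t \in C$. Combining with Observation~\ref{obs:ckdkcd} gives $p_t < \min\{c',d'\}$, hence $ap_t < c'd'$ in $C$, contradicting the lex-minimality of $c'd'$. The principal obstacle will be Observation~\ref{obs:bclemma}: there are many sub-cases depending on whether $p_i = a$ or $p_ia \in F_2$, and whether $q_i = d$ or $q_id \in F_1$, each either producing a forbidden monochromatic alternating 4-cycle or requiring an alternating 6-circuit to propagate the color of the next edge. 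The Lex-BFS calls inside Observations~\ref{obs:amincd}, \ref{obs:cad}, and \ref{obs:ptcd} are also delicate: each must produce exactly the witness vertex that enables a smaller strict pentagon or a smaller component representative.
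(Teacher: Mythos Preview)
Your proposal is correct and follows essentially the same route as the paper's proof: the same minimal strict pentagon, the same chain of observations culminating in $ap_t\in C$ with $ap_t<c'd'$, contradicting the choice of $c'd'$. One small point of order: Observation~\ref{obs:cad} actually relies on Observation~\ref{obs:ckdkcd} in its proof, so you will need to establish $\min\{c',d'\}=\min\{c,d\}$ before (not after) concluding $a>\max\{c,d\}$; otherwise the argument is as you describe.
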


\subsection{No strict switching paths}
In this section, we show that there are no strict switching paths either in $G$ with respect to $(F_0,F_1,F_2)$.
First we note the following observation.
 
 \begin{observation} \label{obs:pathandpent}
 Let $(x,y,z,w)$ be a strict switching path with respect to $(F_0,F_1,F_2)$. Let $y'z'\in E(G)$ be such that $yz',zy'\in E(\overline{G})$. Then, $y'\neq x$ and $z'\neq w$.
 \end{observation}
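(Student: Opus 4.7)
The plan is to derive a direct contradiction with Lemma~\ref{lem:nostrictpent} via Lemma~\ref{lem:pathandpent}. Since $(x,y,z,w)$ is a strict switching path, it is an $A$-switching path for some $A\in\{F_1,F_2\}$ with, in addition, $xy,zw\in A$ (rather than merely $A\cup F_0$). The hypotheses $y'z'\in E(G)$ and $yz',zy'\in E(\overline{G})$ are exactly the hypotheses of Lemma~\ref{lem:pathandpent}.

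Suppose, for the sake of contradiction, that $y'=x$. First I would invoke Lemma~\ref{lem:pathandpent}(\textit{a}) to conclude that $(x=y',y,z,w,z')$ is an $A$-pentagon with respect to $(F_0,F_1,F_2)$. In the notation of the definition of an $A$-pentagon, this corresponds to $a=x$, $b=y$, $c=z$, $d=w$, $e=z'$, so the ``$cd$'' edge of this pentagon is exactly $zw$. Now I would use the extra strictness hypothesis on the switching path: $zw\in A$, not just $A\cup F_0$. This is precisely the additional condition required to upgrade the $A$-pentagon produced by Lemma~\ref{lem:pathandpent} to a \emph{strict} $A$-pentagon. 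But Lemma~\ref{lem:nostrictpent} asserts that there are no strict pentagons in $G$ with respect to $(F_0,F_1,F_2)$, which is a contradiction. Hence $y'\neq x$.

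The argument for $z'\neq w$ is symmetric: assuming $w=z'$, Lemma~\ref{lem:pathandpent}(\textit{b}) gives that $(w=z',z,y,x,y')$ is an $A$-pentagon, and since $xy\in A$ (by strictness of the switching path), this pentagon is again a strict $A$-pentagon, again contradicting Lemma~\ref{lem:nostrictpent}.

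There is no real obstacle here, as the observation is essentially a corollary of the no-strict-pentagon lemma proved in Section~\ref{sec:nostrictpent}; the only thing to check carefully is that the pentagon manufactured by Lemma~\ref{lem:pathandpent} inherits strictness from the switching path, which amounts to tracking which role of the pentagon's ``$cd$'' edge is played by the switching path's $zw$ edge (in case (\textit{a})) or $xy$ edge (in case (\textit{b})).
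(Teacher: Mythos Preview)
Your proposal is correct and follows essentially the same route as the paper's proof: invoke Lemma~\ref{lem:pathandpent} to produce an $A$-pentagon in each case, observe that the strictness hypothesis on the switching path forces the relevant edge ($zw$ in case~(\textit{a}), $xy$ in case~(\textit{b})) into $A$ so that the pentagon is strict, and then contradict Lemma~\ref{lem:nostrictpent}. The paper's argument is the same, just more tersely stated.
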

 \begin{proof}
 	 Let $\{F,\oth{F}\}=\{F_1,F_2\}$. Suppose that $(x,y,z,w)$ is a strict $F$-switching path. Then we have that $xy,zw\in F$, $yz\in \oth{F}$, and $xw\in E(\overline{G})$. By Lemma~\ref{lem:pathandpent} and the fact that $zw,xy\in F$, we know that if $y'=x$ then $(x=y',y,z,w,z')$ is a strict $F$-pentagon, and if $z'=w$ then $(w=z',z,y,x,y')$ is a strict $F$-pentagon. Since we know by Lemma~\ref{lem:nostrictpent} that there are no strict pentagons in $G$, we can conclude that $y'\neq x$ and $z'\neq w$.\hfill\qed
 \end{proof}
 
 Now we show that there are no strict switching paths in $G$. Suppose not.
 Then let $(a,b,c,d)$ be the lexicographically smallest strict switching path in $G$.

 \begin{observation}\label{type1}
 $(a,b,c,d)$ is not a strict $F_1$-switching path.
 \end{observation}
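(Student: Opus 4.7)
Suppose for contradiction that $(a,b,c,d)$ is a strict $F_1$-switching path, so $ab,cd\in F_1$, $bc\in F_2$ and $ad\in E(\overline{G})$. My plan is to adapt the strategy of Section~\ref{sec:nostrictpent}: starting at the $F_2$ edge $bc$, walk along a $G^*$-path inside the component $C$ of $bc$ down to the lex-smallest vertex of $C$, which by Phase~II lies in $F_1$; track the adjacencies with $a,d$ along the way; and finally invoke Lex-BFS (Observation~\ref{obs:lexbfs}) to extract a strict switching path lex-smaller than $(a,b,c,d)$, contradicting its minimality.

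More precisely, let $p'q'$ be the lex-smallest vertex of $C$, so $p'q'\in F_1$, and fix a $G^*$-path $bc=p_0q_0,p_1q_1,\ldots,p_kq_k=p'q'$ with $p_iq_{i+1},q_ip_{i+1}\in E(\overline{G})$ for each $i$, where $k$ is odd because $bc\in F_2$ and $p'q'\in F_1$. The first (and most delicate) step is to prove, by induction on $i$, a switching-path analog of Observation~\ref{obs:cdtobe}: for each $i$ the vertices $p_i,q_i$ avoid both $a$ and $d$, and designated edges from $\{p_i,q_i\}$ to $\{a,d\}$ exist in $G$ with colors forced by the parity of $i$ (edges in $F_2$ for even $i$, in $F_1$ for odd $i$). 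The exclusions $p_i\neq a,q_i\neq d$ come from iterated application of Observation~\ref{obs:pathandpent} to $(a,b,c,d)$, which in turn relies on Lemma~\ref{lem:nostrictpent} to rule out the pentagon alternatives; the symmetric exclusions $p_i\neq d,q_i\neq a$ follow from the same observation applied to the reversed strict switching path $(d,c,b,a)$. The existence and the colors of the cross-edges are then obtained by standard alternating-circuit arguments, using that opposite edges of an alternating $4$-cycle must receive opposite colors.

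Applying the induction at $i=k$ yields that the relevant cross-edges between $\{p',q'\}$ and $\{a,d\}$ all lie in $F_1$. A Lex-BFS argument in the style of Observation~\ref{obs:amincd} then forces $\min\{p',q'\}\ge\min\{a,d\}$: if instead $\min\{p',q'\}<\min\{a,d\}$ held, one of these $F_1$-cross-edges, paired with an Observation~\ref{obs:lexbfs} witness, would immediately produce a strict $F_1$-switching path with a lex-smaller vertex set. A second Lex-BFS argument, exploiting that $p'q'$ is the lex-smallest vertex of its $G^*$-component, then converts this inequality into an actual lex-smaller strict switching path (analogously to how Observation~\ref{obs:ckdkcd} and Observation~\ref{obs:cad} are combined in Section~\ref{sec:nostrictpent}), producing the required contradiction.

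The main obstacle will be pushing the induction through cleanly without the strong adjacency structure available in the pentagon case: the pentagon proof exploits that $b,e$ are both adjacent to $c,d$, whereas here $a$ is only guaranteed to be adjacent to $b$ and $d$ only to $c$, so $ac$ and $bd$ are a priori undetermined. Identifying the correct inductive invariant — precisely which of the four edges $p_ia,q_ia,p_id,q_id$ are forced to exist at each step, and with which color — is the crucial piece of bookkeeping; once it is in place, the concluding Lex-BFS contradiction essentially transplants from Observation~\ref{obs:amincd}.
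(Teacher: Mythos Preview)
Your overall strategy—walk along a $G^*$-path from $bc$ to the lex-smallest vertex $p_kq_k$ of its component, track cross-edges to $a$ and $d$ by induction, and produce a lex-smaller strict switching path—matches the paper. But you are overcomplicating the endgame, and in doing so you have the key invariant slightly wrong and are reaching for a tool (Lex-BFS) that is not needed here.

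The correct inductive invariant is simply that $p_ia,\,q_id\in F_1$ for even $i$ and $p_ia,\,q_id\in F_2$ for odd $i$; you do not need to control $p_id$ or $q_ia$ at all, and your stated parity (``$F_2$ for even $i$'') is backwards, since $p_0a=ba\in F_1$ and $q_0d=cd\in F_1$. The exclusions $p_i\neq a$ and $q_i\neq d$ come from Observation~\ref{obs:pathandpent} applied to the strict $F_1$-switching path $(a,p_{i-1},q_{i-1},d)$ built at the previous step, and then two alternating $F_1$-circuits force the colors of $p_ia$ and $q_id$.

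Once this is in place, the contradiction is immediate and uses no Lex-BFS: since $k$ is odd, $p_ka,\,q_kd\in F_2$ and $p_kq_k\in F_1$, so $(a,p_k,q_k,d)$ is a strict $F_2$-switching path. Because $p_kq_k<bc$ and $\{a,d\}$ is disjoint from both $\{p_k,q_k\}$ and $\{b,c\}$, one gets $\{a,p_k,q_k,d\}<\{a,b,c,d\}$ directly (the smallest element of the symmetric difference lies in $\{p_k,q_k\}$), contradicting the minimality of $(a,b,c,d)$. The Lex-BFS arguments you outline, in the style of Observations~\ref{obs:amincd}--\ref{obs:cad}, are not needed for Observation~\ref{type1}; in the paper, Lex-BFS enters only in the proof of Observation~\ref{type2}.
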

 \begin{proof}
 	Suppose for the sake of contradiction that $(a,b,c,d)$ is a strict $F_1$-switching path. Let $C$ be the component of $G^*$ containing $bc$. Let $b_0c_0,b_1c_1,\ldots,b_kc_k$, where $b_0=b$ and $c_0=c$, be a path in $C$ between $bc$ and the lexicographically smallest vertex $b_kc_k$ in $C$. We assume that for each $i\in\{0,1,\ldots,k-1\}$, $b_ic_{i+1},c_ib_{i+1}\in E(\overline{G})$. As $b_0c_0\in F_2$, it follows that $b_ic_i\in F_2$ for each even $i$ and $b_ic_i\in F_1$ for each odd $i$. Since $b_kc_k$ is the lexicographically smallest vertex in its component in $G^*$, we know that $b_kc_k\in F_1$, which implies that $k$ is odd.
 	
 	We claim that $b_ia,c_id\in F_1$ for each even $i$ and $b_ia,c_id\in F_2$ for each odd $i$, where $0\leq i\leq k$. We prove this by induction on $i$. The case where $i=0$ is trivial as $b_0=b$ and $c_0=c$. So let us assume that $i>0$. Consider the case where $i$ is odd. As $i-1$ is even, by the induction hypothesis we have $b_{i-1}a,c_{i-1}d\in F_1$. Since $b_{i-1}c_{i-1}\in F_2$, we can observe that, $(a,b_{i-1},c_{i-1},d)$ is a strict $F_1$-switching path. Then by Observation~\ref{obs:pathandpent}, we have that $a\neq b_i$ and $d\neq c_i$. Now the alternating $F_1$-circuits $b_i,c_i,b_{i-1},a,d,c_{i-1},b_i$ and $c_i,b_i,c_{i-1},d,a,b_{i-1},c_i$ imply that $b_ia,c_id \in F_2$. The case where $i$ is even is symmetric and hence the claim. 
 	
 	By the above claim, $b_ka,c_kd\in F_2$. Since $b_kc_k\in F_1$, we now have that $(a,b_k,c_k,d)$ is a strict $F_2$-switching path. Since $b_kc_k<bc$, we have that $\{a,b_k,c_k,d\}<\{a,b,c,d\}$, which is a contradiction to our assumption that $(a,b,c,d)$ is the lexicographically smallest strict switching path in $G$.\qed
 \end{proof}
 \begin{observation}\label{type2}
 $(a,b,c,d)$ is not a strict $F_2$-switching path.
 \end{observation}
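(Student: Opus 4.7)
The plan is to mirror the proof of Observation~\ref{type1}, swapping the roles of $F_1$ and $F_2$ at the crucial points. Assume for contradiction that $(a,b,c,d)$ is a strict $F_2$-switching path, so $ab,cd\in F_2$, $bc\in F_1$, and $ad\in E(\overline{G})$. Let $C$ be the component of $G^*$ containing $bc$, and fix a path $b_0c_0,b_1c_1,\ldots,b_kc_k$ in $C$ from $b_0c_0=bc$ to the lexicographically smallest vertex $b_kc_k$ of $C$, with $b_ic_{i+1},c_ib_{i+1}\in E(\overline{G})$ for every $i$. Since $b_0c_0\in F_1$, we get $b_ic_i\in F_1$ for even $i$ and $b_ic_i\in F_2$ for odd $i$. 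Because the lex-smallest vertex of a non-trivial component is colored $1$, it follows that $k$ is even---this is precisely where the parity swaps compared to the type~1 argument.

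Next I would prove by induction on $i$ that $b_ia,c_id\in F_2$ for even $i$ and $b_ia,c_id\in F_1$ for odd $i$. The base case $i=0$ is our hypothesis. For the inductive step, the tuple $(a,b_{i-1},c_{i-1},d)$ is, by the inductive hypothesis together with the known color of $b_{i-1}c_{i-1}$, a strict switching path whose color (either $F_2$ or $F_1$) matches the parity of $i-1$. Observation~\ref{obs:pathandpent}---which invokes the absence of strict pentagons (Lemma~\ref{lem:nostrictpent})---then gives $b_i\neq a$ and $c_i\neq d$, and the alternating circuits $b_i,c_i,b_{i-1},a,d,c_{i-1},b_i$ and $c_i,b_i,c_{i-1},d,a,b_{i-1},c_i$ flip the colors as required.

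Applying this at $i=k$ (even) yields $b_ka,c_kd\in F_2$ and $b_kc_k\in F_1$. Together with $ad\in E(\overline{G})$, this means that $(a,b_k,c_k,d)$ is a strict $F_2$-switching path. Since $b_kc_k<bc$ (as $b_kc_k$ is the lex-smallest vertex of $C$) we have $\{a,b_k,c_k,d\}<\{a,b,c,d\}$, contradicting the lex-minimality of $(a,b,c,d)$. I do not expect a serious obstacle here; the only thing to watch is the parity bookkeeping, because the relevant lex-smallest vertex is reached at \emph{even} rather than \emph{odd} distance, but this change is exactly what is needed for the swapped colors to line up so that $(a,b_k,c_k,d)$ is again a strict $F_2$-switching path rather than one of the other color.
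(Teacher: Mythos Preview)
Your argument has a real gap at the very last step. You write ``Since $b_kc_k<bc$ (as $b_kc_k$ is the lex-smallest vertex of $C$)'', but being lex-smallest only gives $b_kc_k\leq bc$. In the type~1 proof this inequality is strict because there $bc\in F_2$ while the lex-smallest vertex of any non-trivial component lies in $F_1$, forcing $k$ to be odd and hence $k\geq 1$. Here, however, $bc\in F_1$, so nothing prevents $bc$ itself from being the lex-smallest vertex of $C$; in that case $k=0$, $b_kc_k=bc$, and your ``smaller'' strict $F_2$-switching path $(a,b_k,c_k,d)$ is just $(a,b,c,d)$ again---no contradiction.

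This is exactly why the paper's proof of this observation does \emph{not} mirror the type~1 argument. Instead of walking inside the component of $bc$, it first produces an auxiliary vertex $u$ with $au\in F_1$, then walks inside the component of $ab$ (which lies in $F_2$, so the lex-smallest vertex there is genuinely different from $ab$), obtaining a strict $F_2$-switching path $(c,b_k,a_k,u)$. Even then an extra ingredient is needed to compare $\{c,b_k,a_k,u\}$ with $\{a,b,c,d\}$: the paper invokes the Lex-BFS property (Observation~\ref{obs:lexbfs}) to rule out $d<b$, and only after establishing $a,b<d$ does $a_kb_k<ab$ give the desired lex-drop. So the asymmetry between the two observations is essential, and your parity swap alone does not close the argument.
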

 \begin{proof}
 	Suppose for the sake of contradiction that $(a,b,c,d)$ is a strict $F_2$-switching path. By the symmetry between $a$ and $d$, we can assume without loss of generality that $a<d$. 
 	
 	As $bc\in F_1$, the vertex $bc$ belongs to a non-trivial component of $G^*$. Then there exists a neighbor $uv$ of $bc$ in $G^*$ such that $bv,uc\in E(\overline{G})$. As $bc\in F_1$, we have $uv\in F_2$. By Observation~\ref{obs:pathandpent}, we have that $u\neq a$. Then $a,b,v,u,c,d,a$ is an alternating $F_2$-circuit, implying that $au\in F_1$. As $ab\in F_2$, we know that $ab$ is not the lexicographically smallest vertex in its component. Let $a_0b_0,a_1b_1,\ldots,a_kb_k$ be a path in $G^*$ between $ab$ and the lexicographically smallest vertex $a_kb_k$ in its component, where $a_0=a$, $b_0=b$, and for $0\leq i<k$, $a_ib_{i+1},a_{i+1}b_i\in E(\overline{G})$. Note that for $0\leq i\leq k$, $a_ib_i\in F_2$ if $i$ is even and $a_ib_i\in F_1$ if $i$ is odd. Since $a_kb_k\in F_1$ (as it is the lexicographically smallest vertex in its component in $G^*$), this implies that $k$ is odd.
 	
	We claim that for $0\leq i\leq k$, $a_iu,b_ic\in F_1$ if $i$ is even and $a_iu,b_ic\in F_2$ if $i$ is odd. We prove this by induction on $i$.
 	The base case when $i=0$ is trivial, since $au,bc\in F_1$. Let $i>0$ be odd. By the induction hypothesis we have that $a_{i-1}u,b_{i-1}c\in F_1$. Since $a_{i-1}b_{i-1}\in F_2$ we can observe that $(u,a_{i-1},b_{i-1},c)$ is a strict $F_1$-switching path. Therefore by Observation~\ref{obs:pathandpent}, we have that $a_i\neq u$ and $b_i\neq c$. Then we have alternating $F_1$-circuits $a_i,b_i,a_{i-1},u,c,b_{i-1},a_i$ and $b_i,a_i,b_{i-1},c,u,a_{i-1},b_i$, implying that $a_iu,b_ic\in F_2$. The case when $i$ is even is symmetric. This proves our claim. Since $k$ is odd, we now have that $a_ku,b_kc \in F_2$. Note that now $(c,b_k,a_k,u)$ is a strict $F_2$-switching path.
 	
 	Suppose that $d<b$. Then we have that $a<d<b$, where $ad\in E(\overline{G})$ and $ab\in E(G)$. Therefore by Observation~\ref{obs:lexbfs}, there exists $x<a$ such that $xd\in E(G)$ and $xb\in E(\overline{G})$. Then $x,d,a,b,x$ is an alternating 4-cycle in which $ab\in F_2$, implying that $xd\in F_1$. Then we have a strict $F_1$-switching path $(x,d,c,b)$ such that $\{x,d,c,b\}<\{a,b,c,d\}$, which is a contradiction to the choice of $(a,b,c,d)$. Therefore we can assume that $b<d$.
 	Since $a_kb_k< ab$ and $a,b<d$, we have that $\{c,b_k,a_k,u\}<\{a,b,c,d\}$. As $(c,b_k,a_k,u)$ is a strict switching path, this contradicts the choice of $(a,b,c,d)$.
 	\hfill\qed
 \end{proof}
 \medskip
 
 From Observation~\ref{type1} and Observation~\ref{type2}, we have the following lemma.
 \begin{lemma}\label{lem:nostrictpaths}
 There are no strict switching paths in $G$ (with respect to $(F_0,F_1,F_2)$).
 \end{lemma}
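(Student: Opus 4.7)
The plan is to derive the lemma as an immediate corollary of the two preceding observations, via a short minimal-counterexample argument. I would begin by assuming, for the sake of contradiction, that $G$ admits at least one strict switching path with respect to $(F_0,F_1,F_2)$. Since the lexicographic order on $4$-element subsets of $V(G)$ is total, among all strict switching paths $(x,y,z,w)$ there is a well-defined lexicographically smallest one (the one whose underlying vertex set $\{x,y,z,w\}$ is lex minimum); call this path $(a,b,c,d)$.

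Next I would observe that by definition $(a,b,c,d)$ must be a strict $A$-switching path for some $A\in\{F_1,F_2\}$, so exactly one of the two cases $A=F_1$ or $A=F_2$ must hold. Observation~\ref{type1} rules out $A=F_1$: walking from $bc$ along its component in $G^*$ to the lexicographically smallest vertex $b_kc_k$ produces, by induction, a strict $F_2$-switching path $(a,b_k,c_k,d)$ whose underlying set is lex smaller than $\{a,b,c,d\}$. Observation~\ref{type2} rules out $A=F_2$ by a symmetric argument on the component of $ab$, with one extra Lex-BFS step (Observation~\ref{obs:lexbfs}) used to reduce to the case $b<d$ before invoking minimality.

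Since the two cases are exhaustive and each contradicts the minimality of $(a,b,c,d)$, no strict switching path can exist, proving Lemma~\ref{lem:nostrictpaths}. Essentially all the work has already been absorbed into the two observations, so I do not anticipate any genuine obstacle at this step: the only ingredients left to check are that every strict switching path is of type $F_1$ or $F_2$ (immediate from the definition) and that the lex order on $4$-element subsets admits a minimum (since it is a total order). Hence the proof is just the one-line combination already announced in the displayed text preceding the lemma.
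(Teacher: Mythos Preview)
Your proposal is correct and matches the paper's own argument exactly: the minimal counterexample $(a,b,c,d)$ is fixed before Observations~\ref{type1} and~\ref{type2} are stated, and the lemma is then obtained by noting that these two observations exhaust the cases $A=F_1$ and $A=F_2$. There is nothing more to add.
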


\noindent\textbf{Remark.} Given a 2-coloring of $G^*$ in which the color classes are denoted by $E_1$ and $E_2$, Raschle and Simon~\cite{raschle1995recognition} define an ``$AP_6$'' in $G$ to be a sequence $v_0,v_1,\ldots,v_5,v_0$ of distinct vertices of $G$ such that $v_0v_1,v_2v_3,v_4v_5\in E_i$ for some $i\in\{1,2\}$ and $v_1v_2,v_3v_4,v_5v_0\in E(\overline{G})$. A 2-coloring of $G^*$ is said to be ``$AP_6$-free'' if there is no $AP_6$ in $G$ with respect to that coloring.
Raschle and Simon observe that if $G^*$ has an $AP_6$-free 2-coloring, then $G$ has a 2-threshold cover and it can be computed in time $O(|E(G)|^2)$ (using Theorem~3.1, Theorem~2.5, Fact~2 and Fact~1 in~\cite{raschle1995recognition}). The major part of the work of Raschle and Simon is to show that an $AP_6$-free 2-coloring of $G^*$ always exists if $G^*$ is bipartite and that it can be computed in time $O(|E(G)|^2)$ (Sections~3.2 and~3.3 of~\cite{raschle1995recognition}). 
It can be seen that any 2-coloring of $G^*$ obtained by extending the partial 2-coloring of $G^*$ computed after Phases~I and~II of our algorithm is in fact an $AP_6$-free 2-coloring of $G^*$ as follows. Let $E_1$ and $E_2$ be the color classes of such a 2-coloring of $G^*$. We can assume without loss of generality that $F_1\subseteq E_1$ and $F_2\subseteq E_2$. Note that $F_0\subseteq E_1\cup E_2$. Suppose that there is an $AP_6$ $v_0,v_1,\ldots,v_5,v_0$ in $G$ with respect to this coloring where the edges $v_0v_1,v_2v_3,v_4v_5\in E_i$, where $i\in\{1,2\}$. Note that $(\emptyset,E_1,E_2)$ is a valid 3-partition of $E(G)$. For each even $j\in\{0,1,\ldots,5\}$, since $v_j,v_{j+1},v_{j+2},v_{j+3}$ (subscripts modulo 6) is an alternating $E_i$-path, we have that $v_jv_{j+3}\in E(G)$. This implies that for each even $j\in\{0,1,\ldots,5\}$, $v_j,v_{j+1},v_{j+2},(v_{j+5}=v_{j-1}),v_j$ is an alternating 4-cycle in $G$ (note that from the previous observation, we have $v_{j+2}v_{j+5}\in E(G)$), from which it follows that $v_jv_{j+1}$ is in a non-trivial component of $G^*$. Therefore, $v_0v_1,v_2v_3,v_4v_5\notin F_0$. Since these edges belong to $E_i$, it follows that $v_0v_1,v_2v_3,v_4v_5\in F_i$. Then $v_0,v_1,\ldots,v_5,v_0$ is an alternating $F_i$-circuit, and therefore $v_0v_3\in F_{3-i}$. This implies that $(v_2,v_3,v_0,v_1)$ is a strict $F_i$-switching path in $G$, which contradicts Lemma~\ref{lem:nostrictpaths}. Thus the proof of Theorem~\ref{thm:main-thm} can already be completed using the observations in~\cite{raschle1995recognition}. In the next section, we nevertheless give a self-contained proof that shows that $G$ has a 2-threshold cover without using the ``threshold completion'' method used in~\cite{ibarakipeled,raschle1995recognition}. Also note that since it is clear that Phases~I and~II of the algorithm, and also the initial construction of $G^*$, can be done in time $O(|E(G)|^2)$, we have an algorithm with the same time complexity that computes the 2-threshold cover of a graph $G$ whose auxiliary graph $G^*$ is bipartite (note however that there is a faster algorithm for computing a 2-threshold cover due to Sterbini and Raschle~\cite{sterbini1998n3}).

	\subsection{Constructing the 2-threshold cover of $G$}
	
    \begin{observation}\label{obs:nodoublepent}
	There does not exist $a_1,a_2,b_1,b_2,e_1,e_2,c,d\in V(G)$ such that $(a_1,b_1,c,d,e_1)$ is an $F_1$-pentagon and $(a_2,b_2,c,d,e_2)$ is an $F_2$-pentagon.
    \end{observation}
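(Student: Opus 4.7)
The plan is to assume, for contradiction, that such vertices exist, and then exhibit a strict switching path, contradicting Lemma~\ref{lem:nostrictpaths}. The two edges that carry the argument are $b_1b_2$ and $e_1e_2$.

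First I would show that $b_1b_2,e_1e_2\in E(G)$. From the two pentagon definitions we have $a_1b_1\in F_1$, $b_2c\in F_1$, and $a_1c\in E(\overline{G})$. If $b_1b_2\in E(\overline{G})$, then $(a_1,b_1,b_2,c)$ would be an alternating $F_1$-path (note $a_1\neq c$, as they are distinct vertices of the $F_1$-pentagon), and Observation~(a) would force $a_1c\in E(G)$, a contradiction. Hence $b_1b_2\in E(G)$, and the same argument with $e_1$ in place of $b_1$ yields $e_1e_2\in E(G)$.

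Next I would observe that $\{b_1b_2,e_1e_2\}$ are opposite edges of an alternating $4$-cycle in $G$: the two required non-edges $b_1e_1$ and $b_2e_2$ are precisely the ``$be$''-non-edges coming from the $F_1$- and $F_2$-pentagons respectively. Since $(F_0,F_1,F_2)$ is a valid $3$-partition, one of the two edges $b_1b_2,e_1e_2$ must lie in $F_1$ and the other in $F_2$. The statement of the observation is symmetric under swapping the indices $1\leftrightarrow 2$, so I may assume without loss of generality that $b_1b_2\in F_1$ and $e_1e_2\in F_2$.

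Finally, I claim that $(b_2,b_1,c,e_2)$ is a strict $F_1$-switching path: indeed $b_2b_1\in F_1$, $b_1c\in F_2$, $ce_2\in F_1$, and $b_2e_2\in E(\overline{G})$; pairwise distinctness is immediate (for example, $b_1\neq e_2$ because $b_1c\in F_2$ while $e_2c\in F_1$, and similarly for the other pairs). This contradicts Lemma~\ref{lem:nostrictpaths}, completing the proof. The only step requiring insight is spotting that the ``diagonal'' edges $b_1b_2$ and $e_1e_2$ are the right objects to investigate; once one sees that the alternating $F_1$-path trick forces them into $E(G)$ and that the two ``$be$''-non-edges make them span an alternating $4$-cycle, the strict switching path falls out directly.
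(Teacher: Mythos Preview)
Your proof is correct and follows essentially the same approach as the paper: establish $b_1b_2,e_1e_2\in E(G)$ via alternating $F_1$-paths, observe they form an alternating 4-cycle (using $b_1e_1,b_2e_2\in E(\overline{G})$), and then exhibit a strict switching path contradicting Lemma~\ref{lem:nostrictpaths}. The only difference is cosmetic: the paper splits into the two cases $b_1b_2\in F_1$ and $b_1b_2\in F_2$ and uses the switching paths $(c,b_1,b_2,a_2)$ and $(c,b_2,b_1,a_1)$ respectively, whereas you invoke the $1\leftrightarrow 2$ symmetry and use $(b_2,b_1,c,e_2)$; both choices work. One small point of care: you should note $b_1\neq b_2$ (from $b_1c\in F_2$, $b_2c\in F_1$) \emph{before} concluding $b_1b_2\in E(G)$, and ``$e_1$ in place of $b_1$'' should really be ``$e_j$ in place of $b_j$''---but these are phrasing issues, not gaps.
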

    \begin{proof}
    Suppose not. Then as $b_1c,e_1c\in F_2$ and $b_2c,e_2c\in F_1$, we have $\{b_1,e_1\}\cap \{b_2,e_2\}=\emptyset$.
    Then $b_1,a_1,c,b_2$ and $e_1,a_1,c,e_2$ are alternating $F_1$-paths, implying that $b_1b_2,e_1e_2\in E(G)$. As $b_1,b_2,e_2,e_1,b_1$ is an alternating 4-cycle, we have $\{b_1b_2,e_1e_2\}\in E(G^*)$. Thus, $b_1b_2\notin F_0$, or in other words, $b_1b_2\in F_1\cup F_2$. If $b_1b_2\in F_1$, then $(c,b_1,b_2,a_2)$ is a strict $F_2$-switching path, which contradicts Lemma~\ref{lem:nostrictpaths}. On the other hand, if $b_1b_2\in F_2$, then $(c,b_2,b_1,a_1)$ is a strict $F_1$-switching path, which again gives a contradiction to Lemma~\ref{lem:nostrictpaths}.
    \hfill\qed
	\end{proof}

	We shall now describe Phase~III of the algorithm that yields a partial 2-coloring of $G^*$ that can be directly converted into a 2-threshold cover of $G$.
	\begin{tabbing}
	\textbf{Phase~III.} \=For \=each $i\in\{1,2\}$, let\\[.05in]\>\>$S_i=\{cd\in F_0\colon \exists a,b,e\in V(G)$ such that $(a,b,c,d,e)$ is an $F_i$-pentagon in $G$ with\\\>\>respect to $(F_0,F_1,F_2)\}$.\\[0.05in]\>Color every vertex in $S_1$ with 2 and every vertex in $S_2$ with 1.
	\end{tabbing}
	
	Let $F'_0$ be the set of vertices of $G^*$ that are uncolored after Phase III, and for $i\in\{1,2\}$, let $F'_i$ be the set of vertices of $G^*$ that are colored $i$. Clearly, $F'_0=F_0\setminus (S_1\cup S_2)$, $F'_1=F_1\cup S_2$ and $F'_2=F_2\cup S_1$.	Note that $S_1,S_2\subseteq F_0$ and that $S_1\cap S_2=\emptyset$ by Observation~\ref{obs:nodoublepent}. It is easy to see that $\{F'_0,F'_1,F'_2\}$ is a partition of $E(G)$. Further, since $F_1\subseteq F'_1$, $F_2\subseteq F'_2$ and $(F_0,F_1,F_2)$ is a valid 3-partition of $E(G)$, it follows that $(F'_0,F'_1,F'_2)$ is also a valid 3-partition of $E(G)$. We shall show that $(V(G),F'_0\cup F'_1)$ and $(V(G),F'_0\cup F'_2)$ are both threshold graphs, thereby completing the proof of Theorem~\ref{thm:main-thm}. From here onwards, we use the terms ``pentagons'' and ``switching paths'' with respect to $(F'_0,F'_1,F'_2)$ unless otherwise mentioned.
	
	\begin{lemma}\label{lem:nopentagons}
	There are no pentagons in $G$ with respect to $(F'_0,F'_1,F'_2)$.
	\end{lemma}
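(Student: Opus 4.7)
The plan is to argue by contradiction, reducing any putative pentagon in $(F'_0, F'_1, F'_2)$ to an object forbidden by one of the earlier results. Suppose $(a,b,c,d,e)$ is an $F'_1$-pentagon (the $F'_2$-pentagon case is symmetric). The first step is to translate this pentagon back into information about the original partition $(F_0, F_1, F_2)$. I would observe that the adjacencies and non-adjacencies forced by a pentagon already produce several alternating 4-cycles: $ab$ is opposite to $ec$ (since $ac, be \in E(\overline{G})$) and opposite to $ed$ (since $ad, be \in E(\overline{G})$), and symmetrically $ae$ is opposite to $bc$ and to $bd$. Hence each of $ab, ae, bc, bd, ec, ed$ has a neighbor in $G^*$, so none of them lies in $F_0$. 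Because $ab, ae \in F'_1 = F_1 \cup S_2$ and $S_2 \subseteq F_0$, we must in fact have $ab, ae \in F_1$; similarly $bc, bd, ec, ed \in F_2$.

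With this in hand, I would split on where $cd$ lies. Since $(a,b,c,d,e)$ is an $F'_1$-pentagon, $cd \in F'_1 \cup F'_0$, which partitions into three subcases: $cd \in F_1$, $cd \in S_2$, or $cd \in F_0 \setminus (S_1 \cup S_2)$. In the first subcase, combining with the previous paragraph shows that $(a,b,c,d,e)$ is a strict $F_1$-pentagon with respect to $(F_0, F_1, F_2)$, contradicting Lemma~\ref{lem:nostrictpent}. In the third subcase, $(a,b,c,d,e)$ is an (ordinary) $F_1$-pentagon with respect to the original partition and $cd \in F_0$, so by the defining property of $S_1$ we must have $cd \in S_1$, contradicting $cd \notin S_1 \cup S_2$.

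The remaining subcase $cd \in S_2$ is the one where Observation~\ref{obs:nodoublepent} enters: by the definition of $S_2$, there exist $a', b', e'$ such that $(a', b', c, d, e')$ is an $F_2$-pentagon with respect to $(F_0, F_1, F_2)$, while the previous paragraph gives that $(a, b, c, d, e)$ is an $F_1$-pentagon with respect to the same partition, and the two pentagons share the pair $c, d$ — exactly the forbidden configuration. I expect the real content of the proof to be the very first step, translating the $F'_1$-pentagon into a pentagon in the original partition; once that is in place, the rest is a clean three-way case analysis that invokes Lemma~\ref{lem:nostrictpent}, Observation~\ref{obs:nodoublepent}, and the definition of $S_1$ respectively.
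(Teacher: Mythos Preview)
Your proof is correct and follows essentially the same approach as the paper's: first use the alternating 4-cycles among $ab,ae,bc,bd,ec,ed$ to show these edges are not in $F_0$ and hence lie in $F_1$ or $F_2$ as required, then case-split on the location of $cd$. The only organizational difference is that the paper collapses your second and third subcases into a single case ``$cd\in F_0$'': from that it deduces $(a,b,c,d,e)$ is an $F_i$-pentagon with respect to $(F_0,F_1,F_2)$, so $cd\in S_i\subseteq F'_{3-i}$, contradicting $cd\in F'_0\cup F'_i$; this implicitly uses the disjointness $S_1\cap S_2=\emptyset$ (i.e., Observation~\ref{obs:nodoublepent}) rather than invoking it explicitly as you do.
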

	\begin{proof}
	Suppose for the sake of contradiction that $(a,b,c,d,e)$ is a pentagon in $G$ with respect to $(F'_0,F'_1,F'_2)$. Let $i\in\{1,2\}$ such that $(a,b,c,d,e)$ is an $F'_i$-pentagon. Recall that $ec,ab,ed$ and $bc,ae,bd$ are paths in $G^*$ and hence each of $ab,ae,bc,bd,ec,ed$ is in a non-trivial component of $G^*$. Thus none of them is in $F_0$. Since $ab,ae\in F'_i$ and $bc,bd,ec,ed\in F'_{3-i}$, this implies that $ab,ae\in F_i$ and $bc,bd,ec,ed\in F_{3-i}$.
	Since $(a,b,c,d,e)$ is an $F'_i$-pentagon, we have $cd\in F'_0\cup F'_i$. This implies that $cd\notin F'_{3-i}$ and that $cd\in F_0\cup F_i$. If $cd\in F_0$, then $(a,b,c,d,e)$ is an $F_i$-pentagon in $G$ with respect to $(F_0,F_1,F_2)$, which implies that $cd\in S_i$, and therefore $cd\in F'_{3-i}$. Since this is a contradiction, we can assume that $cd\in F_i$. Then $(a,b,c,d,e)$ is a strict $F_i$-pentagon in $G$ with respect to $(F_0,F_1,F_2)$, contradicting Lemma~\ref{lem:nostrictpent}.
	\hfill\qed
	\end{proof}

	\begin{lemma}\label{lem:nopath}
    There are no switching paths in $G$ with respect to $(F'_0,F'_1,F'_2)$.
	\end{lemma}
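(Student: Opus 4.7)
The plan is to follow the same two-observation structure as the proof of Lemma~\ref{lem:nostrictpaths}, but with the refined partition $(F'_0,F'_1,F'_2)$ in place of $(F_0,F_1,F_2)$, and with Lemma~\ref{lem:nopentagons} playing the role of Lemma~\ref{lem:nostrictpent}. The first step is to establish an $F'$-analog of Observation~\ref{obs:pathandpent}: if $(x,y,z,w)$ is any switching path with respect to $(F'_0,F'_1,F'_2)$ and $y'z'\in E(G)$ satisfies $yz',zy'\in E(\overline{G})$, then $y'\neq x$ and $z'\neq w$. This is immediate, because otherwise Lemma~\ref{lem:pathandpent}, applied to the valid 3-partition $(F'_0,F'_1,F'_2)$, would produce a pentagon with respect to $(F'_0,F'_1,F'_2)$, contradicting Lemma~\ref{lem:nopentagons}.

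Next, suppose for contradiction that a switching path with respect to $(F'_0,F'_1,F'_2)$ exists, and let $(a,b,c,d)$ be the lexicographically smallest one. I would then split based on whether it is an $F'_1$- or $F'_2$-switching path, as in Observations~\ref{type1} and~\ref{type2}. Suppose it is an $F'_1$-switching path; then $bc\in F'_2=F_2\cup S_1$. If $bc\in F_2$, then $bc$ lies in a non-trivial component of $G^*$, and I would follow the strategy of Observation~\ref{type1}, working with a $G^*$-path $b_0c_0=bc,b_1c_1,\ldots,b_kc_k$ from $bc$ to the lexicographically smallest vertex of its component (which lies in $F_1\subseteq F'_1$ by Phase~II, so that $k$ is odd). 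Induction on $i$, using the new preliminary observation (to prevent $a=b_i$ or $d=c_i$) and alternating $F'$-circuits (to force the colors of $ab_i$ and $c_id$), shows that $(a,b_i,c_i,d)$ is an $F'_1$-switching path for even $i$ and an $F'_2$-switching path for odd $i$. Since $\{a,b_k,c_k,d\}<\{a,b,c,d\}$, this contradicts the minimality of $(a,b,c,d)$.

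If instead $bc\in S_1$, then by the definition of $S_1$ there is an $F_1$-pentagon $(\alpha,\beta,b,c,\epsilon)$ with respect to $(F_0,F_1,F_2)$, giving $\alpha\beta,\alpha\epsilon\in F_1$, $\beta b,\beta c,\epsilon b,\epsilon c\in F_2$, and $\alpha b,\alpha c,\beta\epsilon\in E(\overline{G})$. My plan here is to combine this pentagon with the switching path $(a,b,c,d)$ to produce a lex-smaller $F'_1$-switching path. Depending on which of $a\alpha,d\alpha$ is a non-edge, one of $(a,b,\beta,\alpha)$, $(a,b,\epsilon,\alpha)$, $(\alpha,\beta,c,d)$, or $(\alpha,\epsilon,c,d)$ is a valid $F'_1$-switching path; in the residual case I would invoke Observation~\ref{obs:lexbfs} on the triple $(a,\alpha,d)$ (or a similar triple) to find a vertex smaller than $a$ from which a lex-smaller switching path can be built. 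The $F'_2$-switching path case is handled analogously, following Observation~\ref{type2}: take a $G^*$-path from an outer edge (such as $ab$ if $ab\in F_2$) to the lex-smallest vertex of its component, and treat the $bc\in S_2$ subcase through the corresponding $F_2$-pentagon.

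The hard part will be the subcase where $bc\in S_1$ (or, symmetrically, $bc\in S_2$): because $bc$ is isolated in $G^*$, the $G^*$-path technique of Observations~\ref{type1} and~\ref{type2} cannot be applied directly. The plan is to exploit the pentagon witnessing $bc\in S_1$ together with the Lex-BFS ordering, via careful case analysis on the adjacencies of the pentagon's apex $\alpha$ with the switching path's endpoints $a$ and $d$, to produce a switching path lexicographically smaller than $(a,b,c,d)$.
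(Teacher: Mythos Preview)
Your plan misses the central simplification that makes the paper's proof short: rather than re-running the lex-minimality machinery of Observations~\ref{type1} and~\ref{type2} for the primed partition, the paper takes an \emph{arbitrary} $F'_i$-switching path $(a,b,c,d)$ and reduces it to a strict switching path with respect to the \emph{original} partition $(F_0,F_1,F_2)$, whereupon Lemma~\ref{lem:nostrictpaths} finishes the job. Concretely, if $bc$ lies in a non-trivial component of $G^*$, a single neighbour $uv$ of $bc$ and two alternating $F'_i$-circuits force $ab,cd\in F'_i$ and show that $ab,cd,bc$ are all non-isolated in $G^*$; hence $ab,cd\in F_i$, $bc\in F_{3-i}$, and $(a,b,c,d)$ is already a strict $F_i$-switching path. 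If instead $bc\in S_i$, the witnessing $F_i$-pentagon $(x,y,b,c,z)$ is combined with $(a,b,c,d)$ via two alternating $F_i$-paths and one alternating $F_i$-circuit to produce a strict $F_{3-i}$-switching path $(a,x,z,d)$ with respect to $(F_0,F_1,F_2)$. No lex-minimal choice, no induction along a $G^*$-path, and no use of the Lex-BFS property is needed.

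Your approach can be pushed through when $bc\in F_{3-i}$ (the inductive walk to the lex-smallest vertex of the component does work, essentially as you outline), but the $bc\in S_i$ subcase is a genuine gap. The candidate switching paths you list, such as $(a,b,\beta,\alpha)$ or $(\alpha,\beta,c,d)$, are indeed $F'_1$-switching paths under the right non-adjacency hypothesis, but there is no reason whatsoever for $\{\alpha,\beta\}$ or $\{\alpha,\epsilon\}$ to be lexicographically small relative to $\{c,d\}$: the pentagon vertices $\alpha,\beta,\epsilon$ can sit anywhere in the ordering. Invoking Observation~\ref{obs:lexbfs} on a triple like $(a,\alpha,d)$ does not help either, since you do not control whether $a<\alpha<d$ or whether the relevant (non-)adjacencies line up. A similar issue afflicts your $F'_2$ case: you propose walking in $G^*$ from $ab$, but $ab$ may lie in $F'_0\subseteq F_0$ and be isolated. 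The fix is not to patch these cases but to drop the lex-minimality framework altogether and reduce to Lemma~\ref{lem:nostrictpaths}, as the paper does.
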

    \begin{proof}
    Suppose not. Let $(a,b,c,d)$ be a switching path in $G$ with respect to $(F'_0,F'_1,F'_2)$. Let $i\in\{1,2\}$ such that $(a,b,c,d)$ is an $F'_i$-switching path. Then we have $ad\in E(\overline{G})$, $ab,cd\in F'_i\cup F'_0$, and $bc\in F'_{3-i}$. Suppose that $bc$ belongs to a non-trivial component of $G^*$. Then there exists $uv\in E(G)$ such that $bv,cu\in E(\overline{G})$. By Lemma~\ref{lem:pathandpent} and Lemma~\ref{lem:nopentagons}, we have that $a\neq u$ and $d\neq v$. Notice that since $bc\in F'_{3-i}$ and $b,c,u,v,b$ is an alternating 4-cycle, we have $uv\in F'_i$. Then $d,c,u,v,b,a,d$ and $a,b,v,u,c,d,a$ are alternating $F'_i$-circuits, implying that $dv,au\in  F'_{3-i}$ and $ab,cd\in F'_i$. This further implies that $(a,b,c,d)$ is a strict $F'_i$-switching path with respect to $(F'_0,F'_1,F'_2)$. Since $b,a,d,v,b$ and $c,d,a,u,c$ and $b,c,u,v,b$ are alternating 4-cycles, we also have that $ab,cd,bc\notin F_0$, which further implies that $ab,cd\in F_i$ and $bc\in F_{3-i}$. Then $(a,b,c,d)$ is also a strict $F_i$-switching path with respect to $(F_0,F_1,F_2)$, which is a contradiction to Lemma~\ref{lem:nostrictpaths}.
    
    Therefore we can assume that $bc$ belongs to a trivial component in $G^*$, i.e. $bc\in F_0$. Since $bc\in F'_{3-i}$, it should be the case that $bc\in S_i$, which implies that there exists an $F_i$-pentagon $(x,y,b,c,z)$ in $G$ with respect to $(F_0,F_1,F_2)$. Since $ab,cd\in F'_i\cup F'_0\subseteq F_i\cup F_0$, we know that $a,d\notin \{x,y,z\}$. Since $a,b,x,y$ and $d,c,x,z$ are alternating $F_i$-paths, we have that $ay,dz\in E(G)$. Since $a,y,z,d,a$ is an alternating 4-cycle, we know that one of $ay,dz$ is in $F_i$ and the other in $F_{3-i}$. Because of symmetry, we can assume without loss of generality that $ay\in F_i$ and $dz\in F_{3-i}$ (by renaming $(a,b,c,d)$ as $(d,c,b,a)$ and interchanging the labels of $y$ and $z$ if necessary). Then $a,y,z,x,c,d,a$ is an alternating $F_i$-circuit, implying that $ax\in  F_{3-i}$. Then $(a,x,z,d)$ is a strict $F_{3-i}$-switching path in $G$ with respect to $(F_0,F_1,F_2)$, which again contradicts Lemma~\ref{lem:nostrictpaths}.
    \hfill \qed
    \end{proof}
    
    Let $\{F,\oth{F}\}=\{F'_1,F'_2\}$. We say that $(a,b,c,d)$ is an \emph{$F$-switching cycle} in $G$ with respect to $(F'_0,F'_1,F'_2)$ if $ab,cd\in F\cup F'_0$ and $bc,ad\in\oth{F}$. As before, we say that $(a,b,c,d)$ is a \emph{switching cycle} in $G$ with respect to $(F'_0,F'_1,F'_2)$ if there exists $F\in\{F'_1,F'_2\}$ such that $(a,b,c,d)$ is an $F$-switching cycle.
    
	\begin{lemma}\label{lem:nocycle}
	There are no switching cycles in $G$ with respect to $(F'_0,F'_1,F'_2)$.
	\end{lemma}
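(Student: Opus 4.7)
Suppose for contradiction that $(a, b, c, d)$ is an $F$-switching cycle with respect to $(F'_0, F'_1, F'_2)$; without loss of generality take $F = F'_1$, so that $ab, cd \in F'_1 \cup F'_0$ and $bc, ad \in F'_2$, and in particular all four of $ab, bc, cd, ad$ are edges of $G$ and $a, b, c, d$ are pairwise distinct (otherwise the color conditions collide).

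The first step is to show that at least one of $ac, bd$ lies in $E(G)$. Indeed, if both are non-edges, then $b, c, a, d, b$ is an alternating 4-cycle of $G$ with opposite edges $bc, ad$, so $\{bc, ad\} \in E(G^*)$. Hence neither $bc$ nor $ad$ is isolated in $G^*$; since $F'_2 = F_2 \cup S_1$ with $S_1 \subseteq F_0$, this forces $bc, ad \in F_2$, which contradicts that $(F_0, F_1, F_2)$ is a valid 3-partition. Under the relabeling $a \leftrightarrow b$ and $c \leftrightarrow d$, the sequence $(b, a, d, c)$ is again an $F'_1$-switching cycle but with the chords $ac$ and $bd$ interchanged, so we may assume $bd \in E(G)$. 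Next, if additionally $bd \in F'_2$ and $ac \in E(\overline G)$, then $(a, b, d, c)$ is an $F'_1$-switching path (since $ab, dc \in F'_1 \cup F'_0$, $bd \in F'_2$, and $ac \in E(\overline G)$), contradicting Lemma~\ref{lem:nopath}. Thus in the remaining cases either $bd \in F'_1 \cup F'_0$, or $\{a, b, c, d\}$ induces a $K_4$ in $G$.

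To dispose of these remaining cases I would analyze the edge $bc \in F'_2$. If $bc \in F_2$, there is a neighbor $uv \in F_1 \subseteq F'_1$ of $bc$ in $G^*$ with $bv, cu \in E(\overline G)$; invoking Lemma~\ref{lem:pathandpent} together with Lemma~\ref{lem:nopentagons}, I would first rule out the coincidences $u = a$ or $v = d$ (each would produce a pentagon), and then construct an $F'_1$-switching path such as $(u, v, c, d)$ or $(a, b, v, u)$ (after verifying the relevant color and non-edge conditions on the interior edges $vc, ud$ or $bu, av$) to contradict Lemma~\ref{lem:nopath}. If instead $bc \in S_1 \subseteq F_0$, the definition of $S_1$ furnishes an $F_1$-pentagon of $G$ with respect to $(F_0, F_1, F_2)$ whose middle pair is $\{b, c\}$, and from this I would build a strict switching path of $G$ with respect to $(F_0, F_1, F_2)$, contradicting Lemma~\ref{lem:nostrictpaths}. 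The main obstacle will be the $K_4$ subcase: there no non-edge within $\{a, b, c, d\}$ is available, so the contradiction must come entirely from the interaction of $u, v$ (or the pentagon vertices) with the $K_4$, which requires careful case analysis of the colors of the edges incident to $u$ and $v$ and the use of alternating circuits to pin those colors down.
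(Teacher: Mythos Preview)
Your detour through the diagonals $ac,bd$ is unnecessary and is the source of most of your difficulties, including the $K_4$ subcase you flag as the ``main obstacle''. The paper's proof never touches $ac$ or $bd$; it simply cases on whether $bc$ lies in a trivial or a non-trivial component of $G^*$, and the key idea you are missing is that the edge $ad\in F'_{3-i}$ from the switching cycle itself can serve as the middle edge of a switching path.

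Concretely, in the non-trivial case (so $bc\in F_2$, with neighbour $uv$ satisfying $bv,cu\in E(\overline G)$), you cannot invoke Lemma~\ref{lem:pathandpent} to exclude $u=a$ or $v=d$ individually: that lemma requires an $A$-switching path as input, and $(a,b,c,d)$ is a switching \emph{cycle}, not a path (since $ad\in E(G)$). Fortunately you do not need to exclude them individually. If $u=a$ \emph{and} $v=d$, then $b,a,c,d,b$ is an alternating $4$-cycle with both opposite edges in $F'_i\cup F'_0$, a contradiction; hence without loss of generality $u\neq a$. Now $a,b,v,u$ is an alternating $F'_i$-path, so $au\in E(G)$. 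Then either $au\in F'_i\cup F'_0$, in which case $(c,d,a,u)$ is an $F'_i$-switching path (middle edge $da\in F'_{3-i}$, non-edge $cu$), or $au\in F'_{3-i}$, in which case $(b,a,u,v)$ is an $F'_i$-switching path (middle edge $au$, non-edge $bv$). Either contradicts Lemma~\ref{lem:nopath}. Your candidate paths $(u,v,c,d)$ and $(a,b,v,u)$ do not work as written: the first needs $vc\in F'_{3-i}$ and $ud\in E(\overline G)$, neither of which is known, and the second would need $bv\in F'_{3-i}$, but $bv$ is a non-edge.

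In the trivial case ($bc\in S_i$), your plan to produce a \emph{strict} switching path with respect to $(F_0,F_1,F_2)$ and contradict Lemma~\ref{lem:nostrictpaths} is also problematic, because $ab,cd\in F'_i\cup F'_0$ only gives $ab,cd\in F_i\cup F_0$; either could lie in $F_0$, so strictness may fail. The paper instead works entirely with respect to $(F'_0,F'_1,F'_2)$: from the $F_i$-pentagon $(x,y,b,c,z)$ one gets $ya,zd\in E(G)$ via alternating $F_i$-paths, and then either $(y,a,d,z)$ (again using $ad\in F'_{3-i}$ as the middle edge) or one of $(x,y,a,b)$, $(x,z,d,c)$ is an $F'_i$-switching path, contradicting Lemma~\ref{lem:nopath}.
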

	\begin{proof}
	Suppose not. Let $(a,b,c,d)$ be a switching cycle in $G$ with respect to $(F'_0,F'_1,F'_2)$. Let $i\in\{1,2\}$ such that $(a,b,c,d)$ is an $F'_i$-switching cycle. Then we have $ab,cd\in F'_i\cup F'_0$ and $ad,bc\in F'_{3-i}$. Suppose that $bc$ belongs to a non-trivial component of $G^*$. Then there exists $uv\in E(G)$ such that $bv,cu\in E(\overline{G})$. Since $b,c,u,v,b$ is an alternating 4-cycle and $bc\in F'_{3-i}$, we have that $uv\in F'_i$. If $u=a$ and $v=d$, then $b,(a=u),c,(d=v),b$ is an alternating 4-cycle in which both the opposite edges belong to $F'_i\cup F'_0$, which is a contradiction. Therefore, either $u\neq a$ or $v\neq d$. Because of symmetry, we can assume without loss of generality that $u\neq a$ (by renaming $(a,b,c,d)$ as $(d,c,b,a)$ and interchanging the labels of $u$ and $v$ if necessary). Then $a,b,v,u$ is an alternating $F'_i$-path, implying that $au\in E(G)$. If $au\in  F'_i\cup F'_0$ then $(c,d,a,u)$ is an $F'_i$-switching path, and if not, then $au\in F'_{3-i}$, in which case $(b,a,u,v)$ is an $F'_i$-switching path. In both cases, we have a contradiction to Lemma~\ref{lem:nopath}.

	Therefore we can assume that $bc$ belongs to a trivial component of $G^*$, i.e. $bc\in F_0$. Since $bc\in F'_{3-i}$, it should be the case that $bc\in S_i$, which implies that there exists an $F_i$-pentagon $(x,y,b,c,z)$ in $G$ with respect to $(F_0,F_1,F_2)$. Since $ab,cd\in F'_i\cup F'_0\subseteq F_i\cup F_0$, $a,d\notin \{x,y,z\}$. As $y,x,b,a$ and $z,x,c,d$ are alternating $F_i$-paths, we have that $ya,zd\in E(G)$. Now if both $ya,zd\in F'_i\cup F'_0$ we have that $(y,a,d,z)$ is an $F'_i$-switching path, which is a contradiction to Lemma~\ref{lem:nopath}. On the other hand, if $ya\in F'_{3-i}$ or $zd\in F'_{3-i}$, then since $xy,xz\in F_i\subseteq F'_i$, we have that either $(x,y,a,b)$ or $(x,z,d,c)$ is an $F'_i$-switching path, which again contradicts Lemma~\ref{lem:nopath}.
	\hfill	
	\qed
	\end{proof}
	
	We are now ready to complete the proof of Theorem~\ref{thm:main-thm}. Consider the graphs $H_1,H_2$, having $V(H_1)=V(H_2)=V(G)$, $E(H_1)=F'_1\cup F'_0$ and $E(H_2)=F'_2\cup F'_0$. We claim that $H_1$ and $H_2$ are both threshold graphs. Suppose for the sake of contradiction that $H_i$ is not a threshold graph for some $i\in\{1,2\}$. Then there exist edges $ab,cd\in E(H_i)$ such that $bc,ad\in E(\overline{H_i})$. If $bc,ad\in E(\overline{G})$, then $a,b,c,d,a$ is an alternating 4-cycle in $G$ whose opposite edges both belong to $F'_i\cup F'_0$, which contradicts the fact that $(F'_0,F'_1,F'_2)$ is a valid 3-partition. So we can assume by symmetry that $bc\in E(G)$. Since $bc\in E(\overline{H_i})$, $bc\notin F'_i\cup F'_0$, which implies that $bc\in F'_{3-i}$. Now if $ad\in E(\overline{G})$, then $(a,b,c,d)$ is an $F'_i$-switching path in $G$ with respect to $(F'_0,F'_1,F'_2)$, which is a contradiction to Lemma~\ref{lem:nopath}. On the other hand, if $ad\in E(G)$, then $ad\in F'_{3-i}$ (since $ad\in E(\overline{H_i})$), which implies that $(a,b,c,d)$ is an $F'_i$-switching cycle in $G$ with respect to $(F'_0,F'_1,F'_2)$, which contradicts Lemma~\ref{lem:nocycle}. Thus we can conclude that both $H_1$ and $H_2$ are threshold graphs. Since $E(G)=E(H_1)\cup E(H_2)$, we further get that $\{H_1,H_2\}$ is a 2-threshold cover of $G$.

    \section{Simpler proofs for paraglider-free graphs and split graphs}\label{sec:split}

    We now show that our proof of Theorem~\ref{thm:main-thm}, as well as the algorithm to construct a 2-threshold cover of a graph $G$ whose auxiliary graph $G^*$ is bipartite, becomes considerably simpler if $G$ is a ``paraglider-free'' graph or ``split graph''.
    
    A \emph{paraglider} is the graph $\overline{P_3\cup K_2}$. Note that the subgraph formed by the edges of a pentagon in a graph is a paraglider. A graph is said to be \emph{paraglider-free} if it contains no induced subgraph isomorphic to a paraglider. Thus, paraglider-free graphs cannot contain any pentagons with respect to any valid 3-partition of $E(G)$.
    
    A graph $G=(X,Y,E)$ is said to be a \emph{split graph} if $X$ is a clique in $G$, $Y$ is an independent set in $G$ and $V(G)=X\cup Y$. It is also known that split graphs are precisely $(2K_2,C_4,C_5)$-free graphs. As the paraglider contains an induced $C_4$, split graphs are paraglider-free.
    
    Let $G$ be a graph such that $G^*$ is bipartite.
    Suppose that $G$ is paraglider-free. Then the proof of Theorem~\ref{thm:main-thm} can be simplified as follows. We skip Phase~III of our algorithm. Thus, once finish running Phases~I and~II of our algorithm and obtain the valid 3-partition $(F_0,F_1,F_2)$ of $E(G)$, we output $H_1=(V(G),F_1\cup F_0)$ and $H_2=(V(G),F_2\cup F_0)$ as the two threshold graphs that form a 2-threshold cover of $G$. We can do this because, the fact that $G$ is paraglider-free implies that $G$ does not contain any pentagons after Phase~II. Thus we can conclude that Lemma~\ref{lem:nostrictpent} holds without any proof (the whole of Section~\ref{sec:nostrictpent} can be omitted). Using Lemma~\ref{lem:nostrictpent}, we can prove Observations~\ref{obs:pathandpent},~\ref{type1} and~\ref{type2} as before without any modification. We now simply set $F'_0=F_0$, $F'_1=F_1$ and $F'_2=F_2$ without running Phase~III, as the fact that there are no pentagons in $G$ implies that $S_1=S_2=\emptyset$. The statement of Lemma~\ref{lem:nopentagons} can be directly seen to be true without any proof. Lemmas~\ref{lem:nopath} and~\ref{lem:nocycle} can be proved as before; actually, the second paragraphs of both these proofs can be omitted as these cases only arise when $bc\in S_i$ for some $i\in\{1,2\}$. It now follows as before that $H_1$ and $H_2$ form a 2-threshold cover of $G$.

    For the case of split graphs (which are a special kind of paraglider-free graphs), we can additionally also skip Phase~I of our algorithm. Suppose that $G=(X,Y,E)$ is a split graph. We start with an arbitrary ordering $<$ of the vertices of $G$, and once we get the valid 3-partition $(F_0,F_1,F_2)$ after running Phase~II of the algorithm, we can output $H_1=(V(G),F_1\cup F_0)$ and $H_2=(V(G),F_2\cup F_0)$ as the two threshold graphs that form a 2-threshold cover of $G$. We follow the same proof as the one for paraglider-free graphs, with the only change being made to the last paragraph of the proof of Observation~\ref{type2}, where Observation~\ref{obs:lexbfs} is used (note that Observation~\ref{obs:lexbfs} no longer holds as $<$ is not necessarily a Lex-BFS ordering). We replace this paragraph with the following:
    
	\begin{quotation}
    Recall that $a_0b_0,a_1b_1,\ldots,a_kb_k$ is a path in $G^*$, such that for any  $i\in \{0,1,\ldots,k-1\}$, $a_ib_{i+1}\in E(\overline{G})$ and $b_ia_{i+1}\in E(\overline{G})$.
    Let $i\in\{0,1,\ldots,k-1\}$. If $a_i$ and $b_{i+1}$ both belong to one of $X$ or $Y$, then it should be the case that $a_i,b_{i+1}\in Y$ (recall that $X$ is a clique in $G$). Since $a_ib_i,a_{i+1}b_{i+1}\in E(G)$ and $Y$ is an independent set in $G$, we then have $b_i,a_{i+1}\in X$. Since $X$ is a clique, this contradicts the fact that $b_ia_{i+1}\in E(\overline{G})$. Therefore we can conclude that for each $i\in\{0,1,\ldots,k-1\}$, one of $a_i,b_{i+1}$ belongs to $X$ and the other to $Y$.
    By the same argument, we can also show that for each $i\in\{0,1,\ldots,k-1\}$, one of $b_i,a_{i+1}$ belongs to $X$ 
    and the other to $Y$. Since $k$ is odd, it now follows that one of $(a=a_0),b_k$ belongs to $X$ and the other to $Y$, and similarly, one of $(b=b_0),a_k$ belongs to $X$ and the other to $Y$. We can therefore conclude that $a\neq b_k$ and $b\neq a_k$. Recall that $a_kb_k<ab$, $a<d$, $(c,b_k,a_k,u)$ is a strict $F_2$-switching path, and $a_iu,b_ic\in F_1$ (resp. $a_iu,b_ic\in F_2$) for each even $i$ (resp. odd $i$). Then we have $a_ku,b_kc\in F_2$ and $au,bc\in F_1$, which implies that $a_k\neq a$ and $b_k\neq b$. We now have $\{a,b\}\cap\{a_k,b_k\}=\emptyset$, and therefore $\min\{a_k,b_k\}<\min\{a,b\}$. But then as $a<d$, we have $\{c,b_k,a_k,u\}<\{a,b,c,d\}$, which is a contradiction to the choice of $(a,b,c,d)$.
	\end{quotation}
     
    Ibaraki and Peled~\cite{ibarakipeled} were the first to show that if $G$ is a split graph, then $G$ has a 2-threshold cover if and only if $G^*$ is bipartite. Our proof, simplified as described above, yields a different proof for this fact which we believe is much simpler than the proofs in~\cite{ibarakipeled} or~\cite{raschle1995recognition}.

	\section{Conclusion}    
	\noindent\textbf{The Chain Subgraph Cover Problem.}
	A bipartite graph $G=(A,B,E)$ is called a \emph{chain graph} if it does not contain a pair of edges whose endpoints induce a $2K_2$ in $G$. Let $\hat{G}$ be the split graph obtained from $G$ by adding edges between every pair of vertices in $A$ (or $B$). It can be seen that $G$ is a chain graph if and only if $\hat{G}$ is a threshold graph. A collection of chain graphs $\{H_1,H_2,\ldots,H_k\}$ is said to be a \emph{$k$-chain subgraph cover} of a bipartite graph $G$ if it is covered by $H_1,H_2,\ldots,H_k$. Yannakakis~\cite{yannakakis1982complexity} credits Martin Golumbic for observing that a bipartite graph $G$ has a $k$-chain subgraph cover if and only if $\hat{G}$ has a $k$-threshold cover. The problem of deciding whether a bipartite graph $G$ can be covered by $k$ chain graphs, i.e. whether $G$ has a $k$-chain subgraph cover, is known as the \emph{$k$-chain subgraph cover ($k$-CSC)} problem. Yannakakis~\cite{yannakakis1982complexity} showed that 3-CSC is NP-complete, which implies that the problem of deciding whether $\thd(G)\leq 3$ for an input graph $G$ is also NP-complete. He also pointed out that using the Golumbic's observation and the results of Ibaraki and Peled~\cite{ibarakipeled}, the $2$-CSC problem can be solved in polynomial time, as it can be reduced to the problem of determining whether a split graph can be covered by two threshold graphs.
	Thus our algorithm for split graphs described in Section~\ref{sec:split} can also be used to compute a 2-chain subgraph cover, if one exists, for an input bipartite graph $G$ in time $O(|E(G)|^2)$ (note that even though $|E(\hat{G})|>|E(G)|$, the vertices in $\hat{G}^*$ corresponding to the edges in $E(\hat{G})\setminus E(G)$ are all isolated vertices and hence can be ignored while computing the partial 2-coloring of $\hat{G}^*$). Note that Ma and Spinrad~\cite{ma19942} propose a more involved $O(|V(G)|^2)$ algorithm for the problem. However, our algorithm for split graphs, and hence the algorithm for computing a 2-chain subgraph cover that it yields, is considerably simpler to implement than the algorithms of~\cite{ibarakipeled,ma19942,raschle1995recognition,sterbini1998n3}.
	
	
	\medskip
	
	Would running just Phases~II and~III of our algorithm always produce a valid 2-threshold cover of $G$ for any graph $G$? That is, could we have started with an arbitrary ordering of the vertices of $G$ instead of a Lex-BFS ordering? We show that the algorithm may fail to produce a 2-threshold cover of the graph $G$ shown in Figure~\ref{fig:auxgraph} if the algorithm starts by taking an arbitrary ordering of vertices in Phase~I. Suppose that the vertices of the graph are ordered according to their labels as shown in Figure~\ref{fig:bad}(a). Clearly, it is not a Lex-BFS ordering of the vertices, as since the vertex in the second position is not a neighbor of the vertex in the first position, it is not even a BFS ordering. The sets $F'_0,F'_1,F'_2$ computed by our algorithm after Phases~II and~III will be as shown in Figure~\ref{fig:bad}(b)---the vertices of $G^*$ in the set $F'_1$ are shown as black, the ones in $F'_2$ as gray and the ones in $F'_0$ as white. In Figure~\ref{fig:bad}(a), the black edges form the graph $H_1$ and the gray edges form the graph $H_2$. Clearly, neither is a threshold graph (for example, both contain a $C_4$). On the other hand, Figure~\ref{fig:lexbfs} shows the 2-threshold cover of $G$ computed by our algorithm if it starts with the Lex-BFS ordering of the vertices of $G$ as indicated by the labels of the vertices in Figure~\ref{fig:lexbfs}(a). Note that starting with a BFS ordering instead of a Lex-BFS ordering will also not work, since we can always add a universal vertex to the graph $G$ shown in Figure~\ref{fig:bad}(a) and number it 0, so that the vertex ordering is now a BFS ordering. It is not difficult to see that the graphs $H_1$ and $H_2$ computed in this case also fail to be threshold graphs (in fact, the edges incident on the vertex labelled 0 are all isolated vertices in the auxiliary graph, and none of them belong to any pentagons; hence they all belong to $F'_0$, and the sets $F'_1$ and $F'_2$ will be exactly the same as before).
	
	\begin{figure}
	\begin{tabular}{p{.49\textwidth}p{.49\textwidth}}
	\parbox{.49\textwidth}{
	\centering
	\renewcommand{\defradius}{0.1}
	\renewcommand{\vertexset}{(6,2,2),(7,0,3),(2,2,3),(4,0,1),(3,4,3),(5,4,1),(1,2,0)}
	\renewcommand{\edgeset}{(1,4,black,2),(1,5,lightgray,2),(2,3,lightgray,2),(2,4,black,2),(2,5,lightgray,2),(2,7,black,2),(3,5,black,2),(3,6,lightgray,2),(4,6,black,2),(4,7,lightgray,2),(5,6,lightgray,2),(6,7,black,2)}
	\begin{tikzpicture}
	\draw [black,line width=2] (0,1.04)--(4,1.04);
	\draw [lightgray,line width=2] (0,0.96)--(4,0.96);
	\drawgraph
	\node[below=2] at (\xy{1}) {1};
	\node[above=2] at (\xy{2}) {2};
	\node[above=2] at (\xy{3}) {3};
	\node[below=2] at (\xy{4}) {4};
	\node[below=2] at (\xy{5}) {5};
	\node[below=2] at (\xy{6}) {6};
	\node[above=2] at (\xy{7}) {7};
	\end{tikzpicture}
	}&
	\parbox{.49\textwidth}{
	\centering
	\renewcommand{\vertexset}{(14,2,0,black),(15,2,1,lightgray),(23,1,0,lightgray),(24,4,0,black),(25,0,1,lightgray),(27,3,1,black),(35,2.5,2,black),(36,3,0,lightgray),(45,2,3),(46,0,0,black),(47,1.5,2,lightgray),(56,4,1,lightgray),(67,1,1,black)}
	\renewcommand{\edgeset}{(14,23),(14,36),(15,27),(15,67),(23,46),(23,67),(24,36),(25,67),(27,36),(27,56),(35,47)}
	\begin{tikzpicture}
	\drawgraph
	\node at (\xy{14}) {\textcolor{white}{14}};
	\node at (\xy{15}) {15};
	\node at (\xy{23}) {23};
	\node at (\xy{24}) {\textcolor{white}{24}};
	\node at (\xy{25}) {25};
	\node at (\xy{27}) {\textcolor{white}{27}};
	\node at (\xy{35}) {\textcolor{white}{35}};
	\node at (\xy{36}) {36};
	\node at (\xy{45}) {45};
	\node at (\xy{46}) {\textcolor{white}{46}};
	\node at (\xy{47}) {47};
	\node at (\xy{56}) {56};
	\node at (\xy{67}) {\textcolor{white}{67}};
	\end{tikzpicture}
	}\vspace{.1in}\\
	\centering (a)&\centering (b)
	\end{tabular}
	\caption{(a) The graph $G$ from Figure~\ref{fig:auxgraph}, with its vertices numbered according to a non-Lex-BFS ordering, and (b) the graph $G^*$ and its partial 2-coloring after Phases~II and~III.}\label{fig:bad}
	\end{figure}
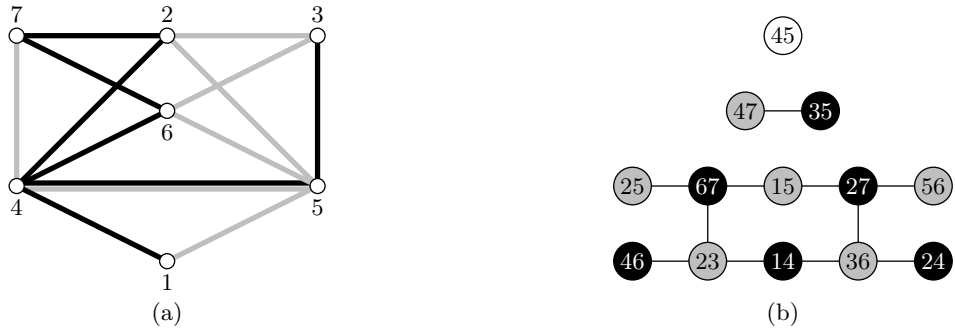

	\begin{figure}
	\begin{tabular}{p{.49\textwidth}p{.49\textwidth}}
	\parbox{.49\textwidth}{
	\centering
	\renewcommand{\defradius}{0.1}
	\renewcommand{\vertexset}{(6,2,2),(7,0,3),(2,2,3),(4,0,1),(3,4,3),(5,4,1),(1,2,0)}
	\renewcommand{\edgeset}{(1,4,black,2),(1,5,lightgray,2),(2,3,lightgray,2),(2,4,black,2),(2,5,lightgray,2),(2,7,black,2),(3,5,lightgray,2),(3,6,lightgray,2),(4,6,black,2),(4,7,black,2),(5,6,lightgray,2),(6,7,black,2)}
	\begin{tikzpicture}
	\draw [black,line width=2] (0,1.04)--(4,1.04);
	\draw [lightgray,line width=2] (0,0.96)--(4,0.96);
	\drawgraph
	\node[below=2] at (\xy{1}) {1};
	\node[above=2] at (\xy{2}) {5};
	\node[above=2] at (\xy{3}) {7};
	\node[below=2] at (\xy{4}) {2};
	\node[below=2] at (\xy{5}) {3};
	\node[below=2] at (\xy{6}) {4};
	\node[above=2] at (\xy{7}) {6};
	\end{tikzpicture}
	}&
	\parbox{.49\textwidth}{
	\centering
	\renewcommand{\vertexset}{(14,2,0,black),(15,2,1,lightgray),(23,1,0,lightgray),(24,4,0,black),(25,0,1,lightgray),(27,3,1,black),(35,2.5,2,lightgray),(36,3,0,lightgray),(45,2,3),(46,0,0,black),(47,1.5,2,black),(56,4,1,lightgray),(67,1,1,black)}
	\renewcommand{\edgeset}{(14,23),(14,36),(15,27),(15,67),(23,46),(23,67),(24,36),(25,67),(27,36),(27,56),(35,47)}
	\begin{tikzpicture}
	\drawgraph
	\node at (\xy{14}) {\textcolor{white}{12}};
	\node at (\xy{15}) {13};
	\node at (\xy{23}) {57};
	\node at (\xy{24}) {\textcolor{white}{25}};
	\node at (\xy{25}) {35};
	\node at (\xy{27}) {\textcolor{white}{56}};
	\node at (\xy{35}) {37};
	\node at (\xy{36}) {47};
	\node at (\xy{45}) {23};
	\node at (\xy{46}) {\textcolor{white}{24}};
	\node at (\xy{47}) {\textcolor{white}{26}};
	\node at (\xy{56}) {34};
	\node at (\xy{67}) {\textcolor{white}{46}};
	\end{tikzpicture}
	}\vspace{.1in}\\
	\centering (a)&\centering (b)
	\end{tabular}
	\caption{(a) The graph $G$ from Figure~\ref{fig:auxgraph}, with its vertices numbered according to a Lex-BFS ordering, and (b) the graph $G^*$ and its partial 2-coloring after Phases~II and~III.}\label{fig:lexbfs}
	\end{figure}
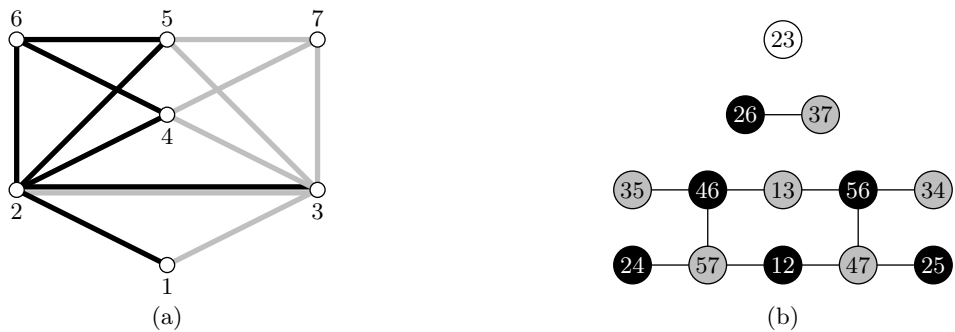
		
	Thus the graph $G$ shown in Figure~\ref{fig:auxgraph} demonstrates that even though Phase~I is optional for split graphs, for general graphs, our algorithm may not produce a 2-threshold cover of the input graph if Phase~I is skipped. Note that the graph $G$ is not a paraglider-free graph. We have not found an example of a paraglider-free graph for which our algorithm will fail if Phase~I is skipped.
	\section*{Acknowledgements}
	The authors wish to express their gratitude to Prof. Nadimpalli Mahadev, who pointed out a serious error in an earlier version of this paper and also to Prof. Martin Golumbic, for his helpful pointers.
	\bibliographystyle{splncs04}
	\bibliography{reference} 
\end{document}